\newtheorem{proposition}{{\bf \sc Proposition}}
\newtheorem{definition}{{\bf \sc Definition}}
\newtheorem{claim}{{\bf \sc Claim}}
\newtheorem{assumption}{{\bf \sc Assumption}}
\theoremstyle{remark} 
\def\eproof{\qed}
\begin{document}

\title{Bring your friend! \\ Real or virtual?\thanks{%
We wish to thank  Larry Samuelson for his precious suggestions.   We also acknowledge Gani Aldashev, Paul Belleflamme, Jean Marie Baland, Ennio Bilancini, Timoteo Carletti, Jacques Cr\'{e}mer, Marc Bourreau, Rosa Branca Esteves, Mathias Hungerbuhler,  Stefano Galavotti, Edoardo Grillo, Mark Le Quement, K\'{a}roly Tak\'{a}cs and Eric Toulemonde for their advices. We wish to thank the audience to the seminars in  Central European University, Cergy-Pontoise, Corvinus University, FUSL, Modena, MTA TK ``L\"{e}ndulet'' RECENS, Sassari, University  of East Anglia, Bologna, IMT-Lucca,
and the the participants at the AFSE 2014, ASSET 2013 and ASSET 2014 Conferences as well as those at Ecore Summer School 2013 (Leuven), Bomopav 2015 (Modena),  IO in the Digital Economy Workshops (Li\`{e}ge) and  GRASS 2016  for their useful comments and critics. Elias Carroni acknowledges the  ``Programma Master \& Back - Regione Autonoma della Sardegna" and the Labex MMD-II for financial support. Simone Righi acknowledges the ``International Mobility Fund",  ``International Publication Fund" and the  "Lend\"{u}let" Program of the Hungarian Academy of Sciences as well as the Hungarian Scientific Research Fund (OTKA K 112929) for financial support.
P.\ Pin acknowledges funding from the Italian Ministry of Education Progetti di Rilevante Interesse Nazionale (PRIN) grant 2015592CTH.}}

\author{Elias Carroni\thanks{%
Department of  Economics (DSE), Alma Mater Studiorum -Universit\`a di Bologna, Italy.
Email: \url{elias.carroni@unibo.it}.}
\and
Paolo Pin\thanks{%
Department of Decision Sciences, IGIER and BIDSA, Bocconi, Italy. Email: \url{paolo.pin@unibocconi.it}.}
\and 
Simone Righi\thanks{Department of Agricultural and Food Sciences  (DISTAL), Alma Mater Studiorum - Universit\`a di Bologna, Italy \& 
MTA TK "Lend\"{u}let" Research Center for Educational and Network Studies (RECENS),
Hungarian Academy of Sciences.
 Email: \url{s.righi@unibo.it}.}
 }

\maketitle

%

\newpage

\begin{abstract}
A monopolist faces a partially uninformed population of consumers, interconnected through a directed social network. In the network, the monopolist offers rewards to informed consumers (influencers) conditional on informing uninformed consumers (influenced).  Rewards are needed to bear a communication cost. 
We investigate the incentives for the monopolist to move to a denser network and the impact of this decision on social welfare. 
Social welfare  increases in information diffusion which, for given communication incentives, is higher in denser networks. However,  the monopolist internalizes transfers and thus may prefer an environment with less competition between informed consumers. The presence of highly connected influencers (hubs) is the main driver that aligns monopolist incentives and  welfare. 
\end{abstract}

\noindent \textbf{JEL\ classification}: D42,  D83, D85.

\noindent \textbf{Keywords}:  Online Social Networks, Word of Mouth Communication, optimal pricing, referral bonuses, complex networks.

\section{Introduction}
\label{sec:intro}

Programs that attribute referral bonuses to customers are an established marketing strategy through which companies attempt to increase their market penetration. This strategy is effective since consumers are part of a network of acquaintances and thus can be incentivized to use their social relationships to diffuse the knowledge about the existence of the company's product. 

In the typical referral--bonus program, the company offers rewards to its established customers, provided that they are able to convince some peers to become new clients. In order to obtain rewards, old customers need to invest in their existing social network by informing their peers about the existence of the product. Depending on their willingness to pay, newly--informed agents will then decide upon purchase. 

Referral bonuses are generally used in markets for subscription goods and services. 
In the market for online storage services, Dropbox offers free storage space to clients that convince their friends to subscribe. According to \cite{huston}, founder and CEO of Dropbox, the referral program extended their client basis by 60\% in 2009 and referrals were responsible for 35\% of new daily signups.
Moreover, banks offer advantageous conditions to old customers introducing new ones. Better conditions are provided both in the form of higher interest rates on the deposit and in that of lowered service fees. Another form of reward used by banks is to embed the rewarding mechanism in established customer loyalization schemes awarding points in exchange for referrals. Such points can then be used to claim prizes (mobile phones, televisions etc).
Other well--known examples can be found in markets for massively multiplayer on--line games, payment systems, touristic accommodation, online content providers,  food--service sector and local shops.

Referral--marketing strategies have been used by companies since long time ago on their local markets. However, modern information technologies dramatically reduce the cost of activating one's own social network, thus incentivizing the use of such strategies on mass markets and, in turns, strongly increasing their potential reach. Thanks to modern communication technologies such as instant messaging, chats, e--mails and online social networks (OSNs), informing the whole own ego--network turns out to be very cheap and quite independent of its size. Indeed, digital communication among consumers has a very low marginal cost and, independently of the means (sms, emails or OSNs posts), entails a fixed cost of communication. Namely, one or  few clicks suffice to inform all acquaintances.

However, diffusing information through OSNs or through the real--world network is  different from the seller's viewpoint, who can  further decide  among many channels. 
 Therefore, her problem is to decide whether to implement a referral program in the real  or in the virtual world and how many channels to use for information  diffusion. 
 Indeed,  the world is such that individuals have  ``real'' and ``virtual'' friends and  the network offline is a subset of the number of peers that are met in the OSNs such as Facebook and  Twitter  (\citealt{vitak2011ties} and \citealt{pollet2011use} among others).\footnote{To put it differently, people may also have different friends in different social networks and the total ego-network in Facebook plus Twitter is a superset of the Facebook-only or Twitter-only ego-network.}
As a matter of fact, examples of both choices can be found. Within the market of touristic accommodation, hotels stimulate word--of--mouth (WOM) in offline social networks, whereas  online touristic accommodation marketplaces such us AirBnB favor WOM happening through OSNs. A similar heterogeneity  of strategies can also be found  in the banking and in the food--catering sectors. 

In all these examples the company provides incentives that target popular and informed  clients, proposing them to sustain a costly investment with an uncertain return. 
From the point of view of the informed customer, the uncertainty of the investment in social network follows from two considerations. Firstly, some of the peers contacted may not be willing to buy the product even once aware of its existence. Secondly, uninformed consumers may get information about the service from multiple sources, while in most referral programs only one person can receive the resulting bonus. We take into account both these issues when modelling the expectations of customers considering to activate their social network. 

The current literature mainly focuses on pricing in OSNs (\citealt{bloch2011pricing}) and telecommunication services (\citealt{shi2003social}), where sellers can directly observe the precise structure of the consumers' network. Referral programs are a simpler but widely used strategy which only requires very limited information about consumers' social interactions. Indeed, following the approach of  \cite{fainmesser2015pricing},  in our setup the seller is only aware of the distribution of the influenced  (in--degrees) and influencing ties (out--degrees). The out--degree matters for the probability that informed people pass the information if adequately incentivized. The in--degree  matters for the probability that uninformed individuals receive the information.
Under limited information, the company cannot directly price--discriminate according to the precise position of the consumer in the social network. However, it can still influence clients' decisions by setting prices and bonuses so that some of them will  diffuse information. 

The power of referrals follows from the fact that each player in the market has incentives that favor the success of this strategy. The producer wants to extend her client base, old customers are motivated by expected rewards and  potential new buyers are given the opportunity to learn about the existence of a potentially valuable service. 
Consequently, the structure of incentives of referral--marketing strategies makes them an effective tool in the presence of significant informational problems on the consumers' side, i.e., when some of the potential customers are unaware of the existence of the product. This situation is typical when a product or service is relatively new on a market and when the existence of many specialized markets leads the consumers to a situation of information overload (\citealt{Zandt2004}). Mass media advertisement can provide a partial solution to this informational problem. However, it is well--known (\citealt{lazarsfeld1955personal}) that information coming from the mass--media is not fully trusted by consumers, who tend to be more influenced by social neighbors. 
The strategy we study is thus an effective and relatively cheap alternative for companies to expand their client base as it allows to harness the power of customers' social networks.

In this paper, a population of  consumers is divided between individuals informed about the existence of the product and others who are not, with a directed network linking the former to the latter. If an informed consumer buys the product, then she   can   inform all her out--neighbors facing a lumpsum cost of communication.  A monopolistic seller decides upon the introduction of a referral program.  She sets prices in two periods and, after the first one, she offers a unitary prize  for each new (previously uninformed) client convinced to buy.
 Each consumer has private information about his own characteristics (degree and preference over the product), whereas the distribution of these characteristics is common knowledge.
 
 Our setup allows  for findings along different lines. First,  we are able to highlight the main tradeoffs faced by the monopolist for given network distribution. Second, we study monopolist's incentives of choosing an online vs offline referral program. Third, we highlight the potential  misalignment of the monopolist's incentives with respect to the social welfare.
 
Consumer's decision on communication as well as the corresponding gain depend on her popularity (the out--degree): only sufficiently connected individuals are incentivized to pay the communication cost. From the point of view of the monopolist, the optimal reward  trades--off between unitary margins and the provision of adequate communication incentives. Reward's size  depends on the distribution of degrees.  The out--degree is crucial to determine the number of consumers who pass the information and the expected bonus that each of them   receives.  For given incentives provided to informed buyers, the in--degree  drives instead  the power of the referral program in terms of informational spread.

The assessment of the  impact on monopolist's profits and welfare of moving from an offline to an online strategy is provided by using the concept of first--order stochastic dominance (FOSD), applied to the degree distribution. The general solution is non--trivial, and the objective of the monopolist may be misaligned with welfare because she does internalize the size of transfers.

We provide results on this possible misalignment by taking broad  classes of degree distributions.
 When both in-- and out--degrees are homogeneous, we show that  a monopolist has always incentives to move online, thus  proposing the bonus to a denser network, provided that the cost of communication is sufficiently small. Oppositely, if only the in--degrees are homogenous and/or the cost of information is sufficiently high, then the monopolist  may prefer  to offer the bonus to a sparser network. 
  
When analyzing welfare, one has to take into account only the intensity of the informational spread, since the bonus is simply a transfer  that does not affect social welfare. For these reasons, if the cost of communication is sufficiently high the monopolist never maximizes welfare, which would require to induce all informed consumers to pass the information. This misalignment of monopolist's incentives has clearly an impact on the desirability of  selling the good either online or offline. The monopolist's incentives to go online turn out to be aligned with welfare when both informed and non--informed consumers are homogeneous, i.e., moving virtual is both preferred by the monopolist and welfare enhancing. Oppositely, when in--degrees are homogeneous and out--degrees not necessarily, if the cost of communication is sufficiently high the monopolist prefers selling the good offline whereas going online would be better from the social viewpoint. 
Moreover, in a more general case, we also provide sufficient conditions on the FOSD shift for the existence of a case in which the monopolist optimally moves online with  detrimental effects on welfare.

Finally, we  study numerically the marginal effects of moving online for two broad classes of in-- and out--degree distributions: random networks (\citealt{erdHos1959random}) and scale--free networks (\citealt{barabasi1999emergence}). We find that incentives to move online are  not always aligned with welfare  and this depends on the presence of hubs in the in-- and out--degree distribution. These hubs make the monopolist more prone to care about information diffusion and thus welfare.

The remaining part of this paper is divided as follows. After discussing the related literature in the following section, we outline the mathematical aspects of the model and we provide the equilibrium in Section \ref{model}. Then, we follow up by performing some comparative statics analysis on the effects of moving online on welfare and on seller's profits (Section \ref{virtual}), and  we finally draw the conclusions (Section \ref{conclusions}).

\section{Related literature}
\label{literature} 

 The empirical literature has long considered word--of--mouth communication (WOM). The seminal work of \cite{lazarsfeld1955personal} formulated the general theory that when people speak with each other and are exposed to information from media, their decisions are based on what peers say rather than on what media communicate. They showed that an effective way for companies and governments to reach their goals is to influence a small minority of opinion leaders, who then tend to spread the message. \cite{arndt1967role} is among the first scholars to study empirically the short--term sales effects of product--related conversations, showing that favorable comments lead to an increase in the adoption of new products and vice--versa.  \cite{van2007new} and \cite{iyengar2011opinion} point out the importance of opinion leaders or influential agents in the diffusion of product adoptions. 
Our paper essentially is in concord with this empirical observation as the company targets relatively more popular agents in order to increase profits. 
Our setup enriches these results by allowing for a comparative statics analysis on networks densities. 

The solipsistic view of the consumer, which characterized the economic discipline in the past, can be relaxed considering the single agent as a member of a social group. Indeed, individuals influence and are influenced by social behavior through local interactions.
 The concept of network has been introduced and applied in a variety of fields.
As pointed out in \cite{jackson2016economic} networks influence agents' economic behavior in fields such as decentralized financial markets, labor markets, criminal behavior and the spread of information and diseases.

 There is a growing literature on learning and diffusion in social networks, summarized in the very recent surveys of  \cite{diff2016handbook} and \cite{learn2016handbook}.
 Diffusion of behaviours in a social network is known to depend on the connectivity distribution of the latter  and on specific features of the diffusion rule (\citealt{lopez2008diffusion}). Furthermore, the role of social influence on binary decisions for generic influence rules is studies by \cite{lopez2012influence}.

In recent years, several scholars have been interested by  the fact that consumers discuss with peers about the products they buy and that this can be taken as given by  sellers when defining their strategies, as noted in  the survey in \cite{bloch2016handbook}. 
There are models that accounts for network externalities between consumers, like \cite{candogan2012optimal}, \cite{fainmesser2015pricing} and \cite{Zenou}.
In this paper we focus instead on word--of--mouth (WOM) communication.
Along these lines \cite{campbell2013word} studies optimal pricing  when few consumers are initially informed and engage in WOM; \cite{galeotti2009influencing} discuss the optimal target to maximize market penetration with WOM; and \cite{galeotti2010talking} investigates the relationship between interpersonal communication and consumer investments in search. When WOM is taken as given, the key issue for the seller is to assess how the consumers' network reacts to any marketing strategy in terms of percolation of information. Instead of focusing on WOM \emph{per se}, our paper discusses communication resulting from a deliberate incentive scheme predisposed by the monopolist. In other words, the strategy we analyze generates communication  which would not exist otherwise. 

Incentivized WOM through referrals has been recently studied by \cite{leduc2015pricing} and \cite{lobel2016customer}.   \cite{leduc2015pricing} focus on the role of referrals in launching of a new product or technology of uncertain quality. They show that  high--degree people are worth being incentivized because, even though they have more incentive to free--ride, they are powerful means of information diffusion. Depending on the network structure, referral strategies can be preferred to inter--temporal price discrimination. We provide a similar result in a context where the  informational asymmetry  is on the  existence of the product. 
\cite{lobel2016customer}   study instead the optimal referral schemes in a rooted network in which a focal consumer decides whether to purchase and whether to inform friends in exchange  for a referral bonus. We share with this paper the strategy studied, but our interest rests on the business decision about  the diffusion channels assuming a given communication--incentive scheme.

 Finally,  industrial economics shifted from the network externalities approach, following the tradition of \cite{katz1985network}, to a new focus on the direct study of the effects of social interactions on the behavior of economic agents. The new tendency comprises the consideration of a subset of neighbors rather than the population overall as the main driving force influencing individual choices (\citealt{Sundararajan,dutta}). The concept of network locality has been used by \cite{dutta} to show the emergence of local monopolies with homogeneous firms competing in prices, by \cite{candogan2012optimal} to study optimal pricing in networks with positive network externalities, by  \cite{bloch2011pricing} to study the optimal monopoly pricing in on--line social networks and by \cite{shi2003social} to study pricing in the presence of weak and strong ties in telecommunication markets. The main concern of the last two papers is price discrimination based on network centrality and on the strength of social ties respectively. While their models assume a full knowledge of the links among consumers, the strategy we discuss requires only very limited information. Instead of gathering detailed information about individuals in order to directly price discriminate, the company offers incentives that motivate buyers to become channels of information diffusion.

\section{Model}
\label{model}

We consider a measurable mass of consumers.
A fraction $1-\beta$ of them is informed about the product offered by a monopolist.
The remaining fraction $\beta$ is not informed.
All consumers have a reservation value which is uniformly distributed as $U(0,1)$, and they are satiated at one unit. The uniform distribution is chosen to simplify computation and exposition, but the qualitative results would not change with any well--defined distribution function, as proved in Appendix \ref{GENERALf}.
The reservation price is assumed to be uncorrelated with being or not being  informed.

Informed consumers are linked with uninformed ones through a bipartite directed network which is the only sub--network relevant for the problem at hand. While obviously any individual  can in principle have other connections, only links going from the informed  to the uniformed group can influence people's behavior.
If an informed consumer buys the product and pays a fixed homogeneous lumpsum cost $c>0$ for \emph{passing the information}, then all her  out--neighbors become informed. \\
This network  is such that informed consumers have an out--degree $k$ distributed according to an i.i.d.~degree distribution $f(k)$, while uninformed consumers  have an in--degree $k$ distributed according to a  i.i.d.~degree distribution   $g(k)$. Nodal characteristics are private information whereas the distributions of degrees, information and reservation values are common knowledge. 
For the sake of consistency of the model,   out-- and in--degrees need to match, which is equivalent to say that:
\begin{equation}
\label{eq:consistency}
(1-\beta) \sum_{k=1}^{\infty} f(k) \cdot k = \beta \sum_{k=1}^{\infty} g(k) \cdot k \ \
\end{equation}
implying that the average in--degree (weighted  by the relative proportion of informed and uninformed  consumers) equates the average out--degree.
We define as $k_{f \min}$, $k_{f \max}$, $k_{g \min}$, $k_{g \max}$  the minimal and maximal non--null elements in the support of $f(k)$ and $g(k)$.
In the remainder of this section we keep the network fixed and solve the problem of the monopolist.

The game implied by this setup  is composed by two stages. In the first one, the monopolist sets price $p_{1}$ and informed consumers decide upon purchase. In the second stage, the latter are offered a unitary  bonus $b$ for each referred uninformed consumer who buys the product.\footnote{Our focus here is the study of the decision to go online or not. In principle, other referral bonus structures could be deviced. These are studied by \cite{lobel2016customer}, fixing the price of the product and focusing on optimal referral schemes. } Uninformed consumers pay price $p_2$. If $\ell$ informed  consumers pass the information to the same uninformed buyer, and that uninformed buyer buys the product,  each of them gets $\frac{b}{\ell}$ in expected terms.

So, the problem of the monopolist is to maximize, with $p_1,p_2 \in [0,1/2]$, and $b \in [0,p_2]$, the following objective function 
\begin{equation}
\label{eq:monopolist_problem}
\pi(p_1,p_2,b)  = 
(1-\beta)  p_1 (1-p_1) + \beta  (p_2-b) (1-p_2) \Gamma(L),
\end{equation}
where $L$ is the fraction of links from informed to uninformed consumers through which the information is passed, while $\Gamma(L)$ is the fraction of uninformed consumers who receive the information. For each uninformed consumer reached by information and buying the product, one bonus is paid to an informed consumer. Hence this must be subtracted from  the price $p_{2} $ and acts as a marginal cost in the second term of  Equation \eqref{eq:monopolist_problem}.

 Given the lumpsum cost of communication and the bonus $b$, only some informed consumers  will pass the information.
By now let us call $L$ the fraction of links from informed  to uninformed consumers in which word--of--mouth communication is used.
From this follows that the expected fraction of uninformed consumers who get the information is:
\begin{equation}
\label{eq:Gamma}
\Gamma(L) = \sum_{k=1}^{\infty} g(k) \left(  1-(1-L)^k \right) = 1 - \sum_{k=1}^{\infty} g(k) (1-L)^k \ \ .
\end{equation}

It is important to notice  that $\Gamma(L)$ is an increasing and convex function of $L$,\footnote{Notice that, if $g(1)<1$, then $\Gamma(L)$ is increasing and  concave in $L$, because each element $(1-L)^k$, for $k\geq2$, is decreasing and concave in $L$. Note also that, whenever $g(1)<1$ and $0 \leq L \leq 1$, we have $\Gamma(L)>L$.
$\Gamma(L)$ is actually strictly concave whenever $g(1)<1$, which is the non--trivial case of at least some receivers with multiple links.}
 with $\Gamma(0) =0$ and $\Gamma(1)=1$, so that for interior values of $L$ we have $\Gamma(L)>L$.
This implies that the fraction of receivers of the word--of--mouth communication is always greater than the fraction of active communication channels, and this has a positive  effect on information spread. Moreover, as long as the receivers have multiple in--links, the offer of the bonus also induces some competition between informed consumers. Indeed, on a social network, an uninformed individual can become aware of the product by multiple sources. Therefore, the expected value of an informed consumer when passing the information needs to take into consideration this congestion effect. 
Let us call as $\phi (L,p_2)$ the expected value (rescaled by $b$) of one out--link when passing the information.
$\phi (L,p_2)$ is given by (i)  the probability $1-p_2$ that the recipient of the message will buy the product, multiplied by  (ii) the expected bonus the sender will receive, which depends on how many other informed consumers connected to the receiver may also have passed -- in expectation -- the information:

\begin{eqnarray}
\phi (L,p_2) &  = & (1-p_2)  \sum_{k=1}^{\infty} g(k)  \left( 1 - \sum_{t=0}^{k-1} {k-1 \choose t} L^t (1-L)^{k-1-t} \frac{t}{t+1} \right) \nonumber
\\
& = & (1-p_2)  \sum_{k=1}^{\infty} g(k)  \left( \frac{1-(1-L)^k}{k L} \right). 
\label{eq:phi}
\end{eqnarray}

Congestion implies that $\phi(L,p_{2})$ is a decreasing function of $L$, so that the more people pass the information, the less would be the expected value that each of them benefits. The ``crowding'' effect is
  stronger as the number of communicators gets higher, but  the problem becomes marginally less severe as many informed buyers pass the information.\footnote{ Whenever $g(1)<1$, $\phi(L,p_2)$ is decreasing and  convex in $L$,\footnote{%
Actually, for $k=1$ we have that $ \frac{1-(1-L)^k}{k L}$ is a constant, it is linear for $k=2$, and then it becomes decreasing and strictly convex for each $k \geq 3$.
So, it is strictly convex only if $g(1)+g(2)<1$.
$\phi$ is $1-p_2$ when $L=0$, and then it decreases to $(1-p_2)  \sum_{k=1}^{\infty} \frac{g(k)}{k} $, when $L=1$.}
and each element $ \frac{1-(1-L)^k}{k L}$ is decreasing and convex in $k$. }

Now, in order to close the model, we explicit how the fraction of activated links $L$. Given the communication  cost he faces, an informed consumer with out--degree $k$ will pass the information only if
\begin{equation} \label{threshold_k}
k  b \phi (L,p_2) \geq c \ \ .
\end{equation}
This means that actually there can be a lower level $\underline{k}$ for which informed consumers with that degree may be indifferent between passing or not the information. \\
Note that a mixed equilibrium with  a threshold $\underline{k}$ in which some informed consumers pass the information and others do not is actually stable, because if the fraction of those passing the information increases (respectively, decreases), then $\phi (L, p_2)$ decreases (increases), and from Equation \eqref{threshold_k} this provides an incentive for other informed consumers to stop (start)  passing the information.
We  define as $\rho$ the fraction of links, conditionally on informed consumers having bought the product, that pass the information, so that we also have: 
\begin{equation}
\label{eq:L}
L=(1-p_1)\rho \ \ .
\end{equation}
Let us define the average out--degree $E(k)=\sum_{k=1}^{f_{\max}} f(k) \cdot k$, and 
\begin{equation}
\Lambda(\rho)= \max \left\{ \underline{k} : \frac{ \sum_{k=\underline{k}}^{f_{\max}} f(k) \cdot k }{E(k) } \geq \rho \right\}
\end{equation}
as the minimal threshold on degree that is consistent with $\rho$.
Since $f(k)$ is a discrete distribution, $\Lambda(\rho)$ is discontinuous.\footnote{
However, it is easy to check that it is lower semi--continuous and that it has only a countable number of lower jumps.}
We are now ready to prove Proposition  \ref{prop:all1/2}.

\begin{proposition}
\label{prop:all1/2}
Every solution of the problem is such that $p_1^* \leq p_2^* =1/2$ and $b = \frac{c}{\underline{k}^{*} \phi (\rho^{*},p_2)}$. 
Moreover:
\begin{enumerate}[(i)]
\item if the equilibrium is such that consumers at $\underline{k}^{*}$ play different strategies, then $p_1^{*}=p_2^{*}$;
\item if otherwise, $\rho^*$ is a point of discontinuity for $\underline{k}^{*}$, then $p_1^{*}$ may be less than $p_2^{*}$.
\end{enumerate}
\end{proposition}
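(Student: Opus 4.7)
The plan is to exploit the near-separability of $\pi$ in the variables $b$, $p_2$, and $(p_1,L)$ once the equilibrium is parameterized by the threshold degree $\underline{k}^*$, and then split the analysis by whether $\underline{k}^*$ arises from a mixing equilibrium or from a jump of $\Lambda$.

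\textbf{Binding bonus.} First I would show that the bonus constraint binds. For any fixed $p_1$, $p_2$, and target threshold $\underline{k}^*$, the profit $\pi$ is strictly decreasing in $b$ through the factor $(p_2-b)$ in the second summand; yet $b$ influences the equilibrium only via the participation inequality $kb\phi(L,p_2)\ge c$, and any $b$ strictly above the level that makes the marginal type indifferent leaves $\underline{k}^*$ (and hence $\rho^*$, $L^*$) unchanged. Therefore $b^*$ equals the minimum level consistent with the chosen threshold, namely the expression stated, with $L^*=(1-p_1^*)\rho^*$.

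\textbf{Optimal $p_2$.} Next I would exploit the factorization $\phi(L,p_2)=(1-p_2)\Phi(L)$, where $\Phi(L)\equiv\sum_{k}g(k)\bigl[1-(1-L)^k\bigr]/(kL)$. Substituting the binding $b^*$ gives $b^*(1-p_2)=c/[\underline{k}^*\Phi(L)]$, which does not depend on $p_2$. The profit thus reduces to
\[
\pi=(1-\beta)\,p_1(1-p_1)+\beta\,\Gamma(L)\left[p_2(1-p_2)-\frac{c}{\underline{k}^*\Phi(L)}\right].
\]
Since $\Gamma(L)\ge 0$ and the only remaining $p_2$-dependence is the strictly concave monopoly term $p_2(1-p_2)$, the maximizer is $p_2^*=1/2$ for every admissible $(p_1,L,\underline{k}^*)$. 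Combined with the feasibility constraint $p_1\le 1/2$ this already yields the weak inequality $p_1^*\le p_2^*$.

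\textbf{Dichotomy (i) vs.~(ii).} Finally I would refine the comparison of $p_1^*$ and $p_2^*$. In case (i) degree-$\underline{k}^*$ consumers randomize, so the indifference identity $\underline{k}^*b^*\phi(L,p_2^*)=c$ must hold. With $b^*$ and $p_2^*$ fixed by the previous steps, this identity pins $L^*$ down, so any marginal variation of $p_1$ is absorbed by a compensating change of $\rho$ along $L=(1-p_1)\rho$; both $\Gamma(L^*)$ and the bracket are invariant and only the monopoly term $(1-\beta)p_1(1-p_1)$ responds to $p_1$, giving $p_1^*=1/2=p_2^*$. In case (ii) $\rho^*$ sits at a discontinuity of $\Lambda$, so no indifference condition constrains $L$: $\rho$ is locked at its discrete value and $L=(1-p_1)\rho^*$ is strictly monotone in $p_1$. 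The reduced program
\[
\max_{p_1\in[0,1/2]}\Bigl\{(1-\beta)p_1(1-p_1)+\beta\,\Gamma\bigl((1-p_1)\rho^*\bigr)\Bigl[\tfrac14-\tfrac{c}{\underline{k}^*\Phi((1-p_1)\rho^*)}\Bigr]\Bigr\}
\]
trades off lost monopoly revenue against a diffusion gain (higher $\Gamma$ together with a lower per-unit bonus cost) obtained by enlarging $L$; for parameter regions in which the diffusion gain dominates near $p_1=1/2$ the optimum is interior, establishing that $p_1^*<p_2^*$ can occur.

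The main obstacle is the envelope step in case (i): because $\Lambda$ is only lower semi-continuous one must verify that small perturbations of $p_1$ around a mixing equilibrium can indeed be matched by an admissible change of $\rho$ without crossing another jump of $\Lambda$, so that the claimed local invariance of $L^*$ really holds. A secondary but conceptually cleaner task is to exhibit, for case (ii), at least one parameter configuration for which the interior maximizer strictly undercuts $1/2$, thereby justifying the word ``may''.
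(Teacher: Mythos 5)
Your argument is correct and rests on the same two mechanisms as the paper's own proof: the exact cancellation of $(1-p_2)$ between the binding bonus $b=c/(\underline{k}^*\phi)$ and the demand term, and the fact that the network part of profit depends on $(p_1,\rho)$ only through $L=(1-p_1)\rho$ together with $\Lambda(\rho)$. Where the paper executes both steps through first-order conditions --- showing that the $p_2$-FOC collapses to $p_2^*=1/2$, and identifying the bracket in the $p_1$-FOC with $d\Pi/dL$, which vanishes at continuity points of $\Lambda$ and is nonnegative at jumps by lower semicontinuity --- you substitute the factorization $\phi=(1-p_2)\Phi(L)$ directly and then use a compensated $(p_1,\rho)$ deviation holding $L$ fixed; this is cleaner for $p_2$ and makes the envelope logic for $p_1$ explicit. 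The obstacle you flag in case (i) is not really one: if consumers at $\underline{k}^*$ play different strategies, $\rho^*$ lies in the open interior of the flat piece of $\Lambda$ associated with $\underline{k}^*$, so small compensating changes of $\rho$ in either direction are admissible and your local invariance of $L^*$ holds. Two further remarks: first, you obtain $p_1^*\le 1/2$ from the feasibility constraint $p_1\in[0,1/2]$, whereas the paper derives it from the sign of $d\Pi/dL$ at a discontinuity, which shows the bound is not an artifact of the constraint set; second, for the word ``may'' in part (ii) the paper likewise gives no analytic example, only the numerical ones of Figure \ref{c_on_p1}, so leaving that as a remaining task puts you at parity with the published argument.
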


\begin{proof}
See \ref{proof_prop:all1/2}.
\end{proof}

Proposition \ref{prop:all1/2} tells that, once the monopolist identifies the $\rho^*$ that maximizes profits, the bonus will be just sufficient  to induce some or all of the buyers with out--degree $\underline{k}^*$ to pass the information. Once information diffusion takes place, the price offered to newly informed consumers will simply be the monopoly price $p_2^*$.   Price $p_{1}^{*}$ turns out to be always equal to $1/2$, except in points of discontinuity,  where  all people at $\underline{k}^{*}$ pass the information,  so that a marginal increase in communication incentives makes the $\underline{k}^{*}$ suddenly jump downwards. Only in those cases,  the monopolist offers some discounts, as show in the examples of Figure \ref{c_on_p1}.

 \begin{figure}[h!]
 \centering
         \includegraphics[width=.32\textwidth]{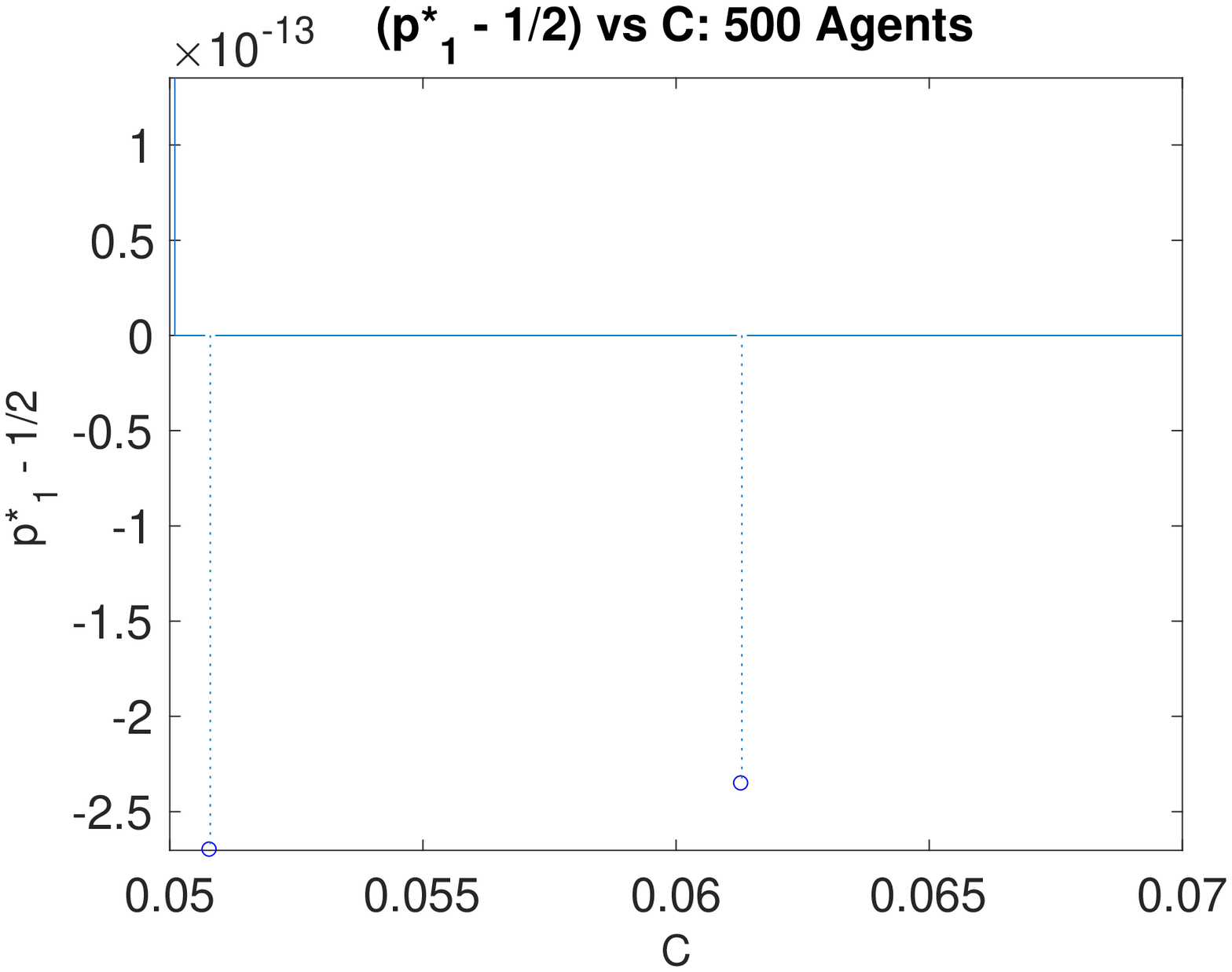}
         \includegraphics[width=.32\textwidth]{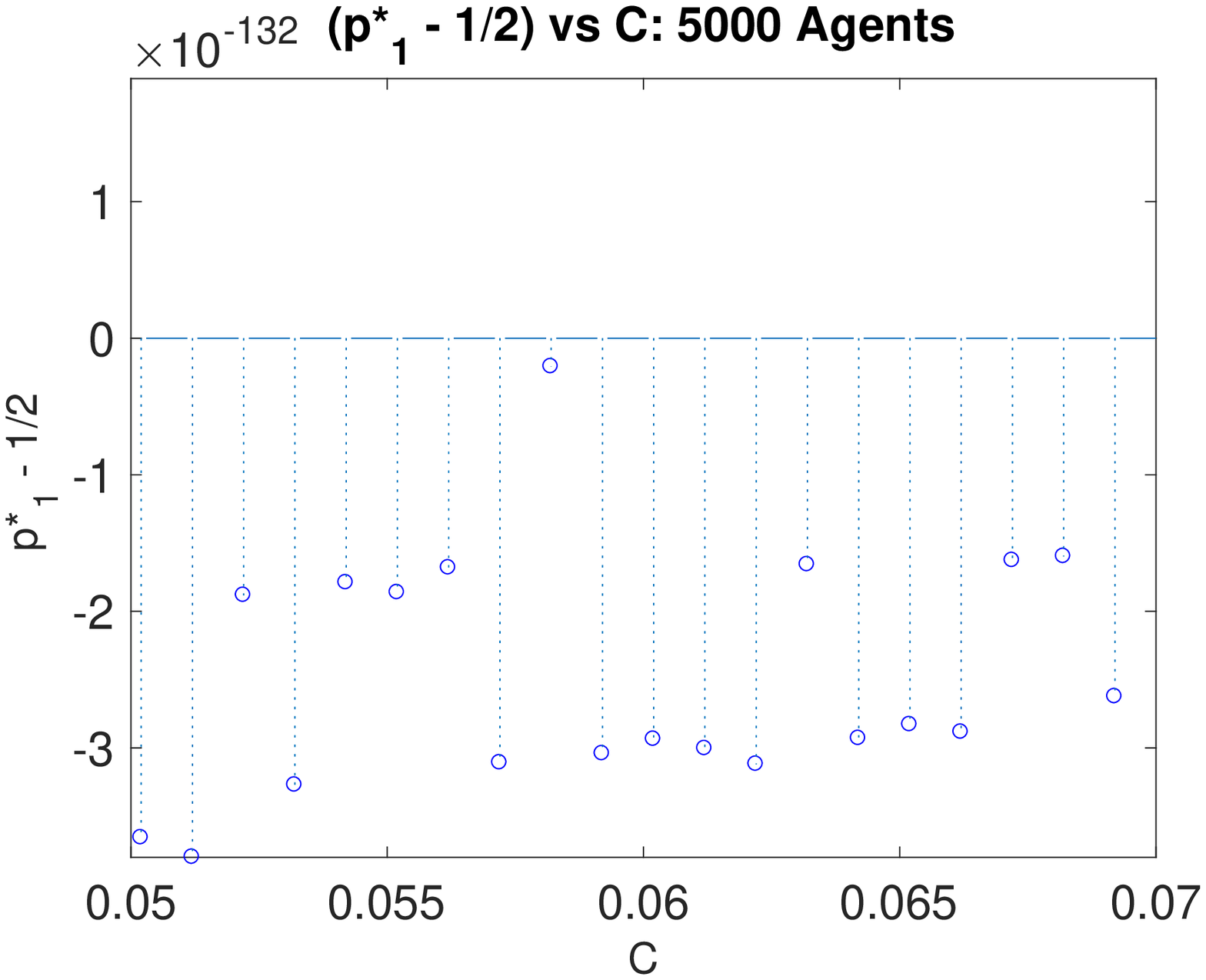}
         \caption{Effect of $c$ on $p_1^*-1/2$. Results are reported  for a random network with probability of link equal to $0.1$ and  $\beta=0.3$. Left panel: $500$ agents. Right Panel: $5000$ agents.}
 \label{c_on_p1}
\end{figure}

One may wonder why the monopolist does not always offer discounted prices to informed consumers in order to enlarge the pool of potential communicators and, consequently, the information spread. The reason is that  the incentive to do so is moderated by the resulting increase in congestion and  Proposition \ref{prop:all1/2} shows that  the two effects perfectly offset, except in points of discontinuity. Furthermore, in Appendix \ref{numerical_analysis} we show numerically that even in these cases $p_1^*$ is  extremely close to $1/2$ and the difference vanishes very fast  as the network becomes larger.\footnote{For example, in a Bernoulli random network, the difference between $p_1^*$  and $1/2$ has an order of magnitude of $10^{-13}$, when the expected degree is $50$ while decreasing to an order of magnitude of $10^{-132}$, when the expected degree is $500$.}
Therefore, in what follows, we can safely assume:  

\begin{assumption}
\label{claim}
Independently on the optimal choice of $b^*$, the maximization problem is solved by $p_1^*= p_2^*=1/2$.
\end{assumption}
 
So, from  Assumption \ref{claim}, we can restate the objective function \eqref{eq:monopolist_problem} of the monopolist as
\begin{eqnarray}
 \label{eq:monopolist_problem4}
\pi(p_1,p_2,\rho)  & = &
(1-\beta)  p_1 (1-p_1) + \beta  \left( p_2-   \frac{c}{ \underline{k} \phi ((1-p_1)\rho,p_2)  } \right) (1-p_2) \Gamma\left( (1-p_1) \rho\right) \nonumber \\
\pi(\rho )  & = &
\frac{1-\beta}{4} + \frac{\beta}{2}   \left( \frac{1}{2} -   \frac{c}{ \underline{k} \phi \left( \frac{\rho}{2}, \frac{1}{2} \right)  } \right)  \Gamma(   \rho / 2 ) \ \ 
\end{eqnarray}
with  $\rho^* \in [0,1]$. The first term of Equation \eqref{eq:monopolist_problem4} refers to informed consumers while the second one  refers to uninformed ones. In the second term, the monopolist balances the diffusion effect measured by $\Gamma(\rho / 2)$ and the congestion effect driven by $\phi\left( \frac{\rho}{2}, \frac{1}{2} \right)$. To put it differently, if the monopolist  induces a higher  fraction of informed buyers to pass the information ($\rho$), this increases demand of non informed buyers ($\Gamma$) but  decreases the net profits generated by each of them, as $\phi$ makes the required bonus bigger. 
 
\subsection{Misalignment of incentives}

The incentives of the monopolist are not necessarily aligned with  social welfare. Indeed, the social value of information is clearly greater than its value to the monopolist, who is paying for generating communication.
The aggregate welfare is given by:
\begin{eqnarray}
W & = & (1-\beta) \left( \int_{p_1}^1 (v-p_1) dv +p_1(1-p_1)\right)+ \beta \cdot  \Gamma(L) \left(\int_{p_2}^1 (v-p_2) dv +p_2(1-p_2)\right) \nonumber \\
& = &  \frac{3}{8} \left( (1-\beta)     +
 \beta \cdot   \Gamma(L) \right).
\label{eq_welfare}
\end{eqnarray}

It is clear that, fixing $p_1=1/2$, welfare is maximal when $L=1/2$, i.e., when $\underline{k}= k_{f \min}$.\footnote{%
In the definition \eqref{eq_welfare} of welfare we neglect the aggregate communication cost incurred by the originally informed agents.
It should be noted however that, even if we took it into account, it would still be the case that the optimal level of communication for the monopolist, if non maximal, would be strictly lower than the optimal level of communication for welfare.
Intuitively, that is because, at the  bliss point of the monopolist, the marginal costs would be equal for both (because the cost for making marginally more informed consumers communicate is the same), but the marginal benefit is lower for the monopolist (because the expected value for new uninformed consumers is higher than the monopolist price).}
 It means that the only concern of a social planner is to maximize diffusion, and this is guaranteed by maximizing  the number of speakers.
Differently,  the profit function of the monopolist is given by \eqref{eq:monopolist_problem4}.
%
In order to better understand the misalignment of incentives between the seller and the social welfare, let us consider  what would be the expected rate of $b$ that an informed consumer would expect to get from each link, if all informed consumers passed the information. 
This is 
\begin{equation}\label{exp_rec}
E_k [ 1/k ] = \sum_{k=1}^{\infty} g(k)/k \ \ .
\end{equation}
It is possible to establish a simple sufficient condition under which the monopolist's incentives will not be aligned with welfare and she will not maximize aggregate utility.
\begin{proposition}\label{prop:misalignment}
If $E_k [ 1/k ] < 2 \frac{c}{k_{f \min}}$, then the monopolist will not maximize welfare.
\end{proposition}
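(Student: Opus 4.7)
The plan is to show that, under the stated condition, the bonus needed to implement the welfare-maximizing allocation is so large that the monopolist would strictly prefer to abandon the referral program altogether; this forces $\rho^* \neq 1$ and hence $L^* \neq 1/2$, so welfare cannot be maximal.

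First, I would pin down the welfare-maximizing benchmark. From Assumption \ref{claim} we have $p_1^*=p_2^*=1/2$, so \eqref{eq_welfare} reduces to $W=\tfrac{3}{8}\bigl((1-\beta)+\beta\,\Gamma(L)\bigr)$, which is strictly increasing in $L$. Since $L=(1-p_1)\rho\leq 1/2$ with equality only at $\rho=1$, welfare is maximized exactly when $\rho=1$, i.e.\ $\underline{k}=k_{f\min}$. Proposition \ref{prop:all1/2} then fixes the associated bonus at $b_W\equiv c/\bigl(k_{f\min}\,\phi(1/2,1/2)\bigr)$.

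Next, I would compare the monopolist's profits at $\rho=1$ and at $\rho=0$. The fallback of offering no referral yields $\pi(0)=(1-\beta)/4$, while setting $b=b_W$ yields, from \eqref{eq:monopolist_problem4},
$$\pi(1)=\frac{1-\beta}{4}+\frac{\beta}{2}\left(\frac{1}{2}-b_W\right)\Gamma(1/2).$$
Because $\Gamma(1/2)>0$ whenever there are uninformed consumers, $\pi(1)<\pi(0)$ as soon as $b_W>1/2$. Thus it suffices to prove $b_W>1/2$ under the hypothesis.

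The only computational step is to bound $\phi(1/2,1/2)$. Using \eqref{eq:phi},
$$\phi(1/2,1/2)=\sum_{k=1}^{\infty}g(k)\,\frac{1-2^{-k}}{k}<\sum_{k=1}^{\infty}\frac{g(k)}{k}=E_k[1/k],$$
where the strict inequality comes term-by-term from $1-2^{-k}<1$. Combining with the hypothesis $E_k[1/k]<2c/k_{f\min}$ gives $\phi(1/2,1/2)<2c/k_{f\min}$, equivalently $b_W>1/2$. Hence $\pi(1)<\pi(0)$, so $\rho^*\neq 1$ and the monopolist's optimum is not welfare-maximal.

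No step looks intrinsically hard; the only delicate point is to make sure that strictness propagates (it does, since $1-2^{-k}<1$ strictly whenever $g(k)>0$ for some $k\geq 1$, which is guaranteed by the existence of uninformed consumers), and to observe that an interior $\rho^*\in(0,1)$ is also non-welfare-maximizing, so it is enough to rule out $\rho^*=1$ rather than characterize the actual optimum.
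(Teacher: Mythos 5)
Your proposal is correct and follows essentially the same route as the paper's own proof: both identify $\rho=1$ (i.e.\ $\underline{k}=k_{f\min}$, $L=1/2$) as the welfare benchmark, bound $\phi(1/2,1/2)=\sum_k g(k)\frac{1-2^{-k}}{k}$ by $E_k[1/k]<2c/k_{f\min}$ to conclude the margin $\tfrac12-\tfrac{c}{k_{f\min}\phi}$ is negative, and note this is dominated by offering no referral at all. The only cosmetic difference is that you make the bound on $\phi$ strict, which is true but not needed given the strict hypothesis.
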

\begin{proof}
See Appendix \ref{proof_misalignment}.
\end{proof}

\bigskip

The intuition of the proof is that the monopolist's incentive is aligned with welfare only if it is possible and profitable for her to provide a $b$ that is high enough so that even informed consumers with the lowest possible out--degree will pass the information. 
Another interpretation of the result is that, if the network is so dense that the expected reciprocal of the in--degree  as defined in Equation \eqref{exp_rec}  is small enough, then the incentives of the monopolist will go against aggregate welfare.
This last consideration anticipates the comparative statics exercises that we present in next section.

\section{Moving virtual}\label{virtual}

Companies running a referral program face the decision on whether to stimulate WOM in  the real--world social network of friendships and acquaintances of consumers or in an online social network like Facebook or Twitter. The latter can be seen a superset of the former, as the OSNs bypass the time and resource constraints related to the maintenance of  real--world friendships. 
In the following, we will compare a sparser and a denser social network calling them the \emph{real} and the \emph{virtual} network, respectively; and we will also call \emph{moving online} the decision to use the virtual network instead of the real one.
This is an easy way to refer to the two cases analyzed by our comparative statics exercise, but it is clear that in applications the two cases could also refer to two networks that are both \emph{virtual}.
As an example consider the case of a monopolist that chooses between using only the social network given by Facebook, or using the union of Facebook, Twitter, and possibly other social media.

The point is that, when a seller decides to move online, she expects to face a denser network that is an overset of the existing one. Formally, adding more links to a network, both in--degree and out--degree distributions $f(k)$ and $g(k)$ receive a shift towards higher density.
Because of this, we can thus use  the concept of first-order stochastic dominance, adopting the definition below provided by \cite{diff2016handbook}.

\begin{definition}
\label{def_fosd}
A distribution $f'$ first-order stochastically dominates (FOSD) a distribution $f$  if,  for every $\hat{k}\in\{1,...,\infty\}$, and for every nondecreasing function $u: \mathbb{R} \rightarrow \mathbb{R}$, it holds that: 
$$\sum\limits_{k=1}^{\hat{k}} u(k) f(k)\geq \sum\limits_{k=1}^{\hat{k}} u(k) f'(k).$$
\end{definition}

We start our analysis from social welfare considerations, for which what matters is only the size of consumers reached by the news of the new product, because the bonus $b$ is just a transfer.
Then, we move to the analysis of the profit of the monopolist, who takes  into account the size of the bonus $b$, which depends not only on the amount of uninformed consumers that will receive news, but also on the competition between informed consumers, which may disincentivize WOM communication.
We study two cases. First -- as an intermediate step in our analysis -- we consider the case in which the  monopolist, when moving online, keeps $\rho^*$ fixed. 
Then, we relax this restriction allowing $\rho^*$ to optimally change going from the sparser to the denser network.  

\subsection{Fixed threshold for communication}

We start by analyzing how welfare and profits are affected by small changes in the in-- and out--degree distributions, if the level of $\rho^*$ is kept fixed.

\paragraph{Welfare.} 
Let us assume a shift in the in--degree distribution to some $g'(k)$ which FOS-dominates $g(k)$. In order to maintain matching between the numbers of  in-- and out--degrees as defined in Equation \eqref{eq:consistency}, and since we are considering a superset network, also a FOSD shift to some $f'(k)$ is needed. All this has a monotone effect on welfare, as reported below.

\begin{proposition}
If $\rho^*$ is kept fixed, welfare is increased when the nodes are more likely to have higher degrees (as a consequence of a FOSD shift).
\label{FOSDWelfare}
\end{proposition}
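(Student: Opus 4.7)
My plan is to exploit Assumption \ref{claim} to freeze prices and reduce the welfare expression to an affine function of $\Gamma(L)$. By Assumption \ref{claim}, the optimum satisfies $p_1^* = p_2^* = 1/2$, so that $L = (1-p_1^*)\rho^* = \rho^*/2$. Keeping $\rho^*$ fixed across the FOSD shift therefore also keeps $L$ fixed. Evaluated at these prices, \eqref{eq_welfare} becomes
\[
W \;=\; \frac{3}{8}\Big((1-\beta) + \beta\,\Gamma(L)\Big),
\]
so the entire comparative statics problem reduces to showing that $\Gamma(L)$ is weakly increased by a FOSD shift of the in-degree distribution $g$.

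Next, I would rewrite
\[
\Gamma(L) \;=\; \sum_{k=1}^{\infty} g(k)\,u(k), \qquad u(k) \;:=\; 1-(1-L)^k,
\]
and observe that, for every $L \in [0,1]$, the integer-indexed sequence $u(k)$ is non-decreasing in $k$, since $(1-L)^k$ is non-increasing in $k$. This places $u$ exactly within the scope of Definition \ref{def_fosd}: under a FOSD shift from $g$ to $g'$, the expectation of any non-decreasing test function is weakly raised, whence $\Gamma_{g'}(L) \geq \Gamma_g(L)$ and therefore $W$ is weakly raised as well. The companion FOSD shift in $f$ that the matching condition \eqref{eq:consistency} forces plays no additional role here, because once $\rho^*$ is held fixed the out-degree distribution $f$ does not appear in the welfare objective at all.

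The argument is essentially a one-line application of the FOSD definition to an explicit non-decreasing sequence, so I do not anticipate any substantial obstacle. The only points that warrant a brief check are the monotonicity of $u(k)=1-(1-L)^k$ in $k$, which is immediate from $0 \leq 1-L \leq 1$, and the observation that, with prices pinned down by Assumption \ref{claim} and $\rho^*$ held fixed, $f$ drops out of $W$, so the accompanying shift in $f$ needed for consistency is inconsequential for the welfare comparison.
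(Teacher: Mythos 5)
Your argument is correct and rests on the same key step as the paper's own proof: writing $\Gamma(L)=\sum_{k} g(k)\bigl(1-(1-L)^k\bigr)$ and applying Definition~\ref{def_fosd} to the non-decreasing test function $k\mapsto 1-(1-L)^k$. The one place you diverge is in the treatment of the out--degree shift. You read ``$\rho^*$ fixed'' literally, so that $L=\rho^*/2$ is unchanged and $f$ drops out of the welfare comparison entirely. The paper instead holds the communication threshold $\underline{k}$ fixed (consistent with the subsection title ``Fixed threshold for communication''), under which the FOSD shift in $f$ raises the fraction of active links from $L$ to some $L'\geq L$; its proof then chains two inequalities, $\Gamma_{g}(L)\leq \Gamma_{g}(L')\leq \Gamma_{g'}(L')$, using first that $1-(1-L)^k$ is increasing in $L$ and then the same FOSD step you use. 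Both readings deliver the stated conclusion that welfare weakly increases, so there is no gap in your proof; it establishes the literal version of the statement, in which only the in--degree shift does any work, whereas the paper's version also credits the out--degree shift with a second, reinforcing channel.
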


\begin{proof} 
See Appendix \ref{proof_FOSDWelfare}.
\end{proof}

The result in Proposition \ref{FOSDWelfare} follows from the fact that, for the welfare computation, the impact of higher network density on competition for bonuses disappears. Indeed, any $b$ paid to some consumer is just a transfer from the monopolist to the consumers and has no effects on the welfare. Without such competition, increasing density only allows more information to circulate and thus it increases welfare.

\paragraph{Profits.}
The effects of a FOSD increase in $f(k)$ and $g(k)$ on profits are not as unambiguous as in the case of welfare.
This is because a denser network increases $L$, and also $\Gamma (L)$, but the  effects on $b=\frac{c}{\phi \underline{k}}
$  go in the opposite direction, as the FOSD  boosts competition between informed consumers. As a result,  the monopolist will be  forced to increase $b$
\footnote{ Notice that this is analogous to an increase in the fraction  $\rho^{*}$ of past buyers that will pass the information.} to compensate for this, if she wants to maintain the same $\rho^{*}$.

 \begin{figure}[h!]
 \centering
       \includegraphics[width=.45\textwidth]{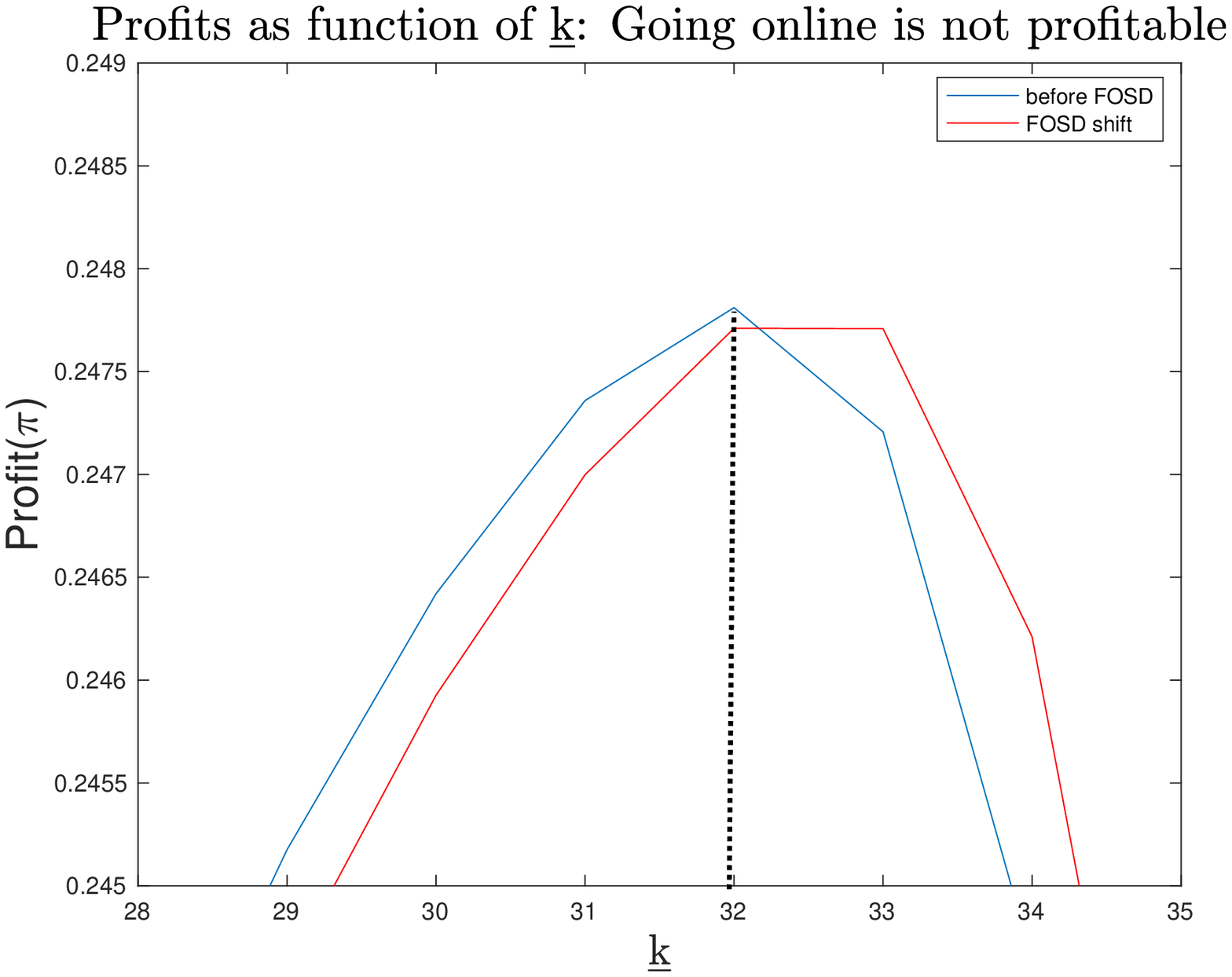}
        \includegraphics[width=.45\textwidth]{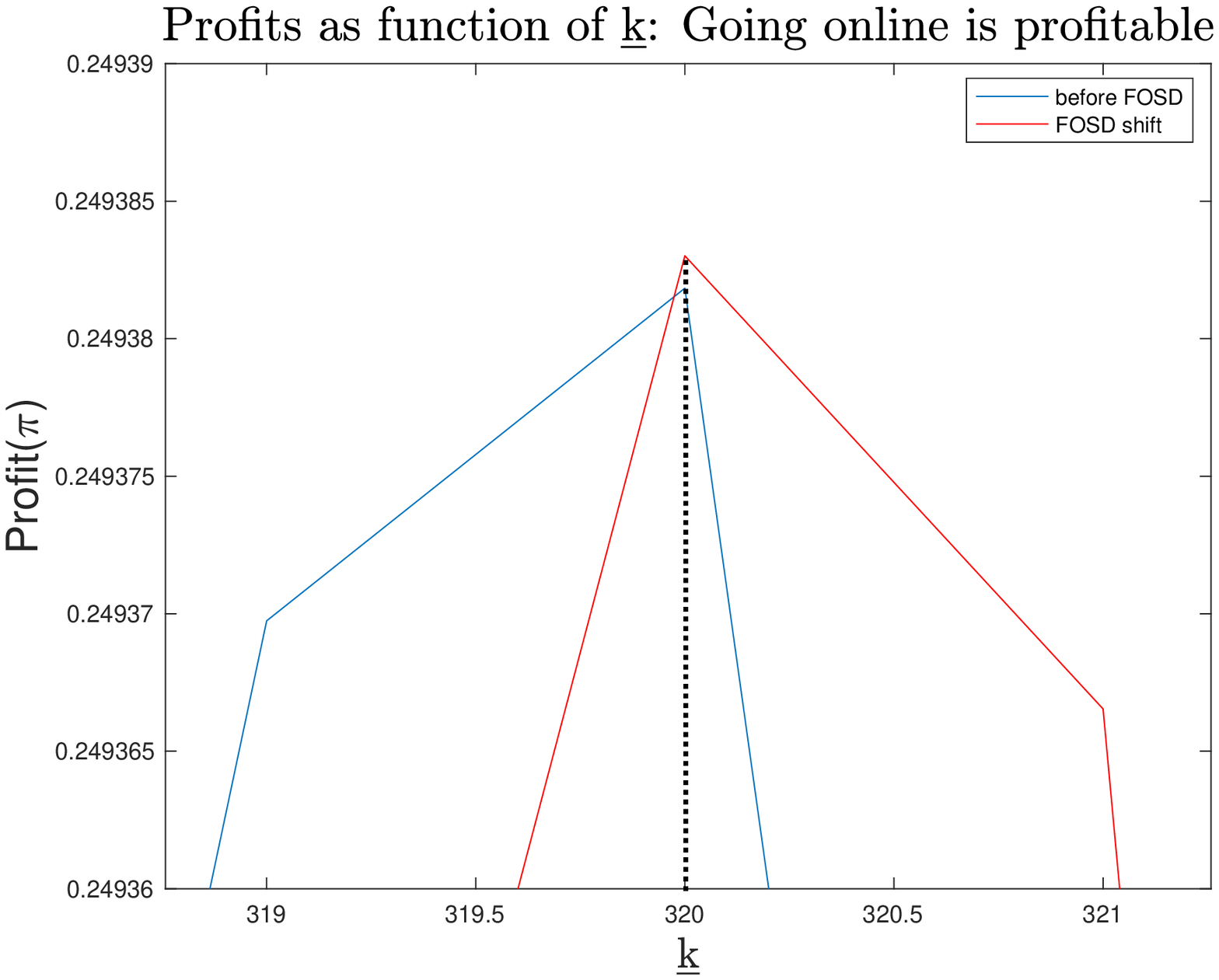}
        \caption{ Left Panel: Example in which going online is not profitable. Both in-- and out-- degree are Bernoulli random networks with probability of an in--link $\lambda_{g}=0.1$, $\lambda_{f}\sim0.0251$ and $\beta=0.2$. FOSD is done by increasing $\lambda_{g}$ by $1.6\cdot 10^{-3}$. Right Panel: Example in which going online is  profitable. Both in-- and out-- degree are random networks with $\lambda_{g}=0.44$, $\lambda_{f}\sim  0.2931$ and $\beta=0.4$. FOSD is performed  by increasing $\lambda_{g}$ by $1\cdot 10^{-3}$. In both cases network size is 1000. }
\label{examples_profit_going_online}
\end{figure}

In Figure \ref{examples_profit_going_online}, we report  examples of tiny FOSD shifts which do not entail any change in the optimal $\underline{k}$. As it can be observed, there are cases in which moving virtual is profit--enhancing  and others in which it is not. The density of the network and the proportion of uninformed people become crucial, as they determine the gains in terms of informational spread that would follow a switch to a denser network.  
The left panel considers a case in which  there are few uninformed on a relatively sparse network. In this case, a marginal increase in density creates a negligible amount of additional information spread. Moreover, a higher density also strengthens competition among informed buyers and therefore requires a higher bonus. The balance between competition and diffusion of information makes it not worth it  to go online. 
Differently, when many uninformed consumers are laid on a denser network (right panel),  information diffusion becomes more salient to the monopolist. Therefore she finds it profitable to pay the  higher  unitary bonus needed for passing the information, as the gains in terms of information diffusion are more important.

\subsection{Endogenous threshold for communication}

When setting the optimal $\rho^*$, the monopolist faces a trade--off between increasing information diffusion (represented by $\Gamma$) and reducing the cost of providing incentives. 
An increase of $\rho^*$ will boost  $\Gamma$  but will also make competition among informed consumers tougher,  requiring a higher $b$. This comparative statics analysis is non--trivial and its results strongly depend on the functional forms given to the degree distributions.
The trade--off between information diffusion and competition among consumers is clear--cut  even if not analytically provable in general. Still, the analysis of  archetypal network topologies helps us to assess the effects of the network density on monopolist incentives and optimal decisions.
 In what follows, we provide  analytical solution for networks with homogenous degree (either in-- and out--degrees  or only in--degrees) and for specific FOSD shifts. We conclude by comparing   numerical solutions on two of the most important classes of networks, namely random and scale--free networks.

%
%
%
%
%
%
%
%
%
%

\paragraph{Homogenous degree distributions.}

In what follows, we consider the simplest case in which all consumers  have the same in-- and out--degree. This allows us to get rid of any effect of increasing the density on the number of  consumers who pass the information.

\begin{proposition}\label{prop:FOSD_homogeneous}
Let the out--degree be $k_f$ and the in--degree be $k_g$ for all agents. 
In this case, moving to an online network with $k'_f>k_f$ and $k'_g>k_g$ is always weakly profitable for the monopolist. 
If the monopolist was using word--of--mouth in the sparser network, then moving online is always strictly profitable for the monopolist and welfare increasing.
\end{proposition}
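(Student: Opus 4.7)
The plan is to leverage the two-point support of $f$ and $g$ to collapse the profit function into an explicit expression, and then to exploit a cancellation that is specific to the homogeneous case.

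First I would specialise equations \eqref{eq:Gamma} and \eqref{eq:phi}: with $f$ concentrated at $k_f$ and $g$ at $k_g$, they reduce to $\Gamma(L) = 1-(1-L)^{k_g}$ and $\phi(L,1/2) = \Gamma(L)/(2k_g L)$. The active threshold $\underline{k}^*$ can only be $k_f$, so Assumption~\ref{claim} and equation \eqref{eq:monopolist_problem4} give, after the $\Gamma$ in the denominator of the bonus cancels the $\Gamma$ that multiplies it and the matching condition $(1-\beta) k_f = \beta k_g$ eliminates the ratio $k_g/k_f$,
\[
\pi(\rho) \;=\; \tfrac{1-\beta}{4} \;+\; \tfrac{\beta}{4}\bigl[1-(1-\rho/2)^{k_g}\bigr] \;-\; \tfrac{(1-\beta)\,c\,\rho}{2}.
\]
The crucial feature is that the marginal cost of communication, $(1-\beta)c/2$, does not depend on the density at all.

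The profit claim is then pointwise. Moving online to $(k'_g,k'_f)$ with $k'_g/k_g = k'_f/k_f = \alpha > 1$ (forced by consistency) leaves the cost term untouched and only changes $(1-\rho/2)^{k_g}$ to the smaller $(1-\rho/2)^{k'_g}$, so the new profit $\tilde{\pi}(\rho)$ satisfies $\tilde{\pi}(\rho)-\pi(\rho) = \tfrac{\beta}{4}\bigl[(1-\rho/2)^{k_g}-(1-\rho/2)^{k'_g}\bigr]\geq 0$, strict for $\rho>0$. Maximising on both sides yields $\max_{\rho}\tilde{\pi}(\rho) \geq \tilde{\pi}(\rho^*) \geq \pi(\rho^*) = \max_{\rho}\pi(\rho)$, weak in general and strict whenever WOM was used in the sparser network ($\rho^*>0$).

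For welfare, by equation \eqref{eq_welfare} it suffices to show $(1-\rho^{**}/2)^{k'_g}\leq(1-\rho^*/2)^{k_g}$, where $\rho^{**}$ denotes the new optimum. At an interior optimum, the first-order condition for $\pi$ reads $(1-\rho^*/2)^{k_g-1}=4c/k_f$, so $(1-\rho^*/2)^{k_g}=(4c/k_f)^{k_g/(k_g-1)}$, with the analogous expression on the primed side. Setting $x=4c/k_f\in(0,1)$, the required inequality reduces after a few lines of algebra to $(\alpha-1)\ln(1/x) \leq \alpha(k_g-1)\ln\alpha$. Feasibility $\rho^*\leq 1$ is equivalent to $\ln(1/x)\leq(k_g-1)\ln 2$, which collapses everything to the one-variable inequality $(\alpha-1)\ln 2 \leq \alpha\ln\alpha$; this follows from $h(\alpha)=\alpha\ln\alpha-(\alpha-1)\ln 2$ being zero at $\alpha=1$ with $h'(\alpha)=\ln\alpha+1-\ln 2>0$ on $[1,\infty)$. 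The corner case $\rho^*=1$ persists in the denser network (denser networks only strengthen incentives) and is handled directly by $(1/2)^{k'_g}<(1/2)^{k_g}$.

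The delicate step is the welfare comparison. Unlike profit, welfare cannot be compared pointwise because the optimal $\rho^*$ need not be monotone in $(k_g,k_f)$: implicit differentiation of the FOC gives $d\rho^*/dk_g$ proportional to $1/k_g+\ln(1-\rho^*/2)$, whose sign can flip. The workaround is to use the FOC to eliminate $\rho^*$ altogether from the welfare expression, reducing the two-parameter comparison to the clean one-variable inequality in $\alpha$ above.
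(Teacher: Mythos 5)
Your reduction of the second--period profit to $\frac{\beta}{4}\Gamma(\rho)-\frac{(1-\beta)c\,\rho}{2}$ and the pointwise--dominance argument for the profit claim is exactly the paper's route (the paper's appendix actually drops the factor $1/2$ on the cost term; your constant is the correct one, and the slip is immaterial since the point in both cases is that the cost term is density--independent). Where you genuinely depart from the paper is the welfare claim. The paper, after establishing pointwise dominance of profits, simply asserts that ``both welfare and monopolist's profit increase,'' which does not by itself rule out that the new optimiser $\rho^{**}$ falls far enough to lower $\Gamma$; you correctly flag this as the delicate step and close it by using the FOC $(1-\rho^*/2)^{k_g-1}=4c/k_f$ to eliminate $\rho^*$, reducing the comparison to $(\alpha-1)\ln 2\le\alpha\ln\alpha$. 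I checked that algebra and it is right; this is a real gain in rigour over the published argument.

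One step is wrong, though fixable: the claim that the corner $\rho^*=1$ ``persists in the denser network'' because ``denser networks only strengthen incentives.'' The marginal benefit of $\rho$ at $\rho=1$ is proportional to $k_f(1/2)^{k_g-1}$, which after the shift becomes $\alpha k_f(1/2)^{\alpha k_g-1}$; the ratio $\alpha\, 2^{-(\alpha-1)k_g}$ is below one for essentially all $\alpha>1$, so the incentive at the corner \emph{weakens} and the dense--network optimum can become interior (e.g.\ $k_g=10$, $4c/k_f=2^{-20}$, $\alpha=100$ gives $\rho^{**}\approx 0.04$). The conclusion nevertheless survives: in that mixed case the corner condition yields $\ln(1/x)\ge(k_g-1)\ln 2$, and substituting the dense--network FOC into the target inequality $(1-\rho^{**}/2)^{k'_g}\le(1/2)^{k_g}$ reduces, by the same manipulation you already performed, to the very same inequality $(\alpha-1)\ln 2\le\alpha\ln\alpha$. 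Replace the persistence assertion with that computation and your proof is complete.
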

\begin{proof}
See Appendix \ref{FOSD_homogeneous}.
\end{proof}

The interpretation of the results of Proposition \ref{prop:FOSD_homogeneous} is straightforward. 
 Indeed, this is a degenerate case in which the absence of variance in the degrees allows the monopolist to fully extract the surplus of communicators, who are all \emph{infra--marginal}. The resulting incentives to go online are clearcut: the choice of the monopolist is  whether making people at $k_{f}$ willing to pass the information for given network density.  Depending on the cost of communication this can be profitable or not in the sparser network. As long as the choice of going online is concerned, a denser network has a positive effect on the informational spread $\Gamma(\rho^{*})$ for given number of targeted senders $\rho^{*}$, whereas the marginal cost of providing the required incentives remains constant in both networks. Therefore,  the monopolist always  increases profits for any given number of senders. This has to be intended as a weak condition, meaning that the cost of communication might be so high to prevent profitable diffusion of information both in the sparser and in the denser network.  The circulation of more  information also guarantees  a welfare improvement. 
 However, as discussed in many cases below,  the monopolist faces a material trade--off when introducing variability in the degrees.

\paragraph{Homogenous non--informed consumers.}
In relation to Proposition \ref{prop:FOSD_homogeneous},  we now relax the assumption of homogeneity in the degree of informed consumers and we look for sufficient conditions under which the monopolist prefers to stay offline. 
We can prove the following: 
\begin{proposition}\label{prop:HOMOIN}
Consider a situation in which the in--degree distribution is homogeneous.
Consider a FOSD shift such that $k_g$ increases to $k'_g$, and 
$f(k)$ also has a FOSD shift to some $f'(k)$, so that the consistency condition in Eq. \eqref{eq:consistency} is maintained.
If $c > \frac{\beta^2}{4(1-\beta) }$, then it is possible to find a distribution $f'(k)$ such that the monopolist will obtain  strictly lower profits in the denser network.
\end{proposition}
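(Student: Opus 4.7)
My plan is to simplify the profit function under the homogeneous in-degree assumption, construct an explicit two-point $f'$ FOSD-dominating a degenerate $f$, and show that both of the monopolist's relevant actions in the denser network yield strictly lower profit than her optimum in the sparser one.

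First, I would specialize \eqref{eq:monopolist_problem4} to the case where $g$ puts unit mass on $k_g$. Direct substitution gives $\Gamma(L) = 1 - (1-L)^{k_g}$ and $\phi(L, 1/2) = \Gamma(L)/(2 k_g L)$, so that the profit collapses to
\[\pi(\rho) \;=\; \frac{1-\beta}{4} + \frac{\beta}{4}\Gamma_{k_g}(\rho/2) - \frac{\beta c k_g \rho}{2\,\underline{k}(\rho;f)}\,,\]
a scalar objective in $\rho$ whose only dependence on $f$ enters through the threshold $\underline{k}(\rho;f)$.

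Second, I would take $f$ degenerate at $k_f = \beta k_g/(1-\beta)$ with $k_g$ chosen large enough that $4c < k_f$; the hypothesis $c > \beta^2/(4(1-\beta))$ is compatible with this because $4c < k_f$ reduces to $k_g > 4c(1-\beta)/\beta$, which for any admissible $c$ admits integer solutions. Under this choice, the first-order condition $(1-\rho/2)^{k_g-1} = 4c/k_f$ has an interior maximizer $\rho^{*}\in (0,1]$ and $\pi_{\mathrm{old}}^{*} > (1-\beta)/4$. I would then define $f'$ as the two-point distribution assigning mass $\alpha$ to $k_f$ and mass $1-\alpha$ to a large degree $K$, with $\alpha k_f + (1-\alpha)K = \beta k'_g/(1-\beta)$ pinning $K$ in terms of $\alpha$ and the new in-degree $k'_g > k_g$; $f'$ FOSD-dominates $f$ by construction.

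Third, I would bound the monopolist's profit on each of the two nondegenerate thresholds $\underline{k}\in\{k_f, K\}$ available in the denser network. On the branch $\underline{k}=k_f$ the profit has the same functional form as the original but with $k'_g$ in place of $k_g$; the new FOC becomes $(1-\rho/2)^{k'_g-1}=4c/k_f$ and plugging back yields the closed form
\[\pi^{*}_{b} \;=\; \frac{1}{4} - \frac{\beta c}{k_f}\Bigl[1 + (k'_g-1)\,\rho^{*}_{b}/2\Bigr]\,,\]
which is strictly below $\pi^{*}_{\mathrm{old}}$ once one establishes that $h(n):=n(1 - x^{1/n})$ is strictly increasing in $n$ for each fixed $x\in(0,1)$ (a routine calculus check using $h(n) = -\ln x - (\ln x)^2/(2n) + O(n^{-2})$). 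On the branch $\underline{k}=K$ the reachable fraction of communicators is capped at $\bar\rho=(1-\alpha)K/k'_f$; sending $K\to\infty$ and $\alpha\to 1$ along the path that keeps $(1-\alpha)K = k'_f - k_f$ bounded, the cost term $\beta c k'_g \rho/(2K)$ vanishes while $\bar\rho \to 1 - k_g/k'_g$, and the limit profit becomes $(1-\beta)/4 + (\beta/4)\Gamma_{k'_g}(\bar\rho/2)$. A sufficiently small FOSD shift $k'_g - k_g$ then forces this term to fall below the old WOM gain $\pi^{*}_{\mathrm{old}} - (1-\beta)/4$ secured in Step 2.

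The main obstacle is the non-uniform control of the $\underline{k}=K$ branch at finite $K$: the congestion-bonus cost $\beta c k'_g \bar\rho/(2K)$ does not vanish uniformly, so the two limits must be taken in the correct order, first sending $K$ large enough to make the cost negligible and only afterwards shrinking $k'_g - k_g$ to suppress the free-diffusion bypass through super-hubs. Once the better of the two new branches is proved strictly below $\pi^{*}_{\mathrm{old}}$, the existence claim in the proposition follows.
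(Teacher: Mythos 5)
Your argument is essentially correct, but it follows a genuinely different route from the paper's. The paper's proof goes in the opposite direction on the shape of $f'$: it \emph{homogenizes} the out--degree distribution, pushing $f'$ toward a point mass with $k'_{f\max}/E_{f'}(k)\to 1$ and $f'(k'_{f\max})\to 1$, so that the monopolist loses the ability to cheaply target a thin upper tail of influencers; the hypothesis $c>\beta^2/(4(1-\beta))$ is then exactly what makes word--of--mouth unprofitable for \emph{every} pair $(\underline{k},q)$ in that limit network, so the monopolist reverts to $(1-\beta)/4$ while the sparser network retained a strictly positive WOM margin. You instead keep the old atom at $k_f$ and graft on a vanishing mass of super--hubs at $K\to\infty$ (a construction much closer in spirit to the paper's proof of Proposition~\ref{prop:general}), and you show both surviving branches are dominated: the $\underline{k}=k_f$ branch via the clean envelope comparison through the monotonicity of $h(n)=n(1-x^{1/n})$ (which correctly isolates the congestion cost of the denser $g$ at fixed threshold), and the $\underline{k}=K$ branch via the cap $\bar\rho\to 1-k_g/k'_g$ on reach. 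Your reduction of the profit to $\frac{\beta}{4}\Gamma_{k_g}(\rho/2)-\frac{\beta c k_g\rho}{2\underline{k}}$ is the correct one (the paper's Equation~\eqref{profit_piece_lattice} drops a factor $\beta/2$ on the cost term), and the branch analysis is sound for $k_g$ large.

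Two caveats you should make explicit. First, you never actually use the hypothesis $c>\frac{\beta^2}{4(1-\beta)}$ except to check ``compatibility''; this is because you also \emph{choose} the original network ($f$ degenerate at $k_f$ with $k_g$ large enough that the old optimum is interior), so you prove an existence statement over pairs of networks rather than the claim for an arbitrary given sparse network. The paper's condition on $c$ is what lets its argument apply to any original $f$ with active WOM, so your proof has narrower scope even though its conclusion is, on your chosen instance, stronger (it holds for all $c>0$). Second, degrees are integers, so ``a sufficiently small FOSD shift $k'_g-k_g$'' is not available: the minimal shift is $k'_g=k_g+1$, giving $\bar\rho=1/(k_g+1)$ and a super--hub branch payoff tending to $\frac{\beta}{4}(1-e^{-1/2})$, which you must compare against the old WOM gain tending to $\frac{\beta}{4}$; the inequality still holds, but only because you take $k_g$ large, not because the shift is small. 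You should also dispose explicitly of the corner case $\rho^*=1$ (where the $h(n)$ comparison reverses, consistently with Proposition~\ref{prop:FOSD_homogeneous}) and of the no--WOM option in the new network, both of which are easy but currently implicit.
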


\begin{proof}
See Appendix \ref{HOMOIN}.
\end{proof}

Differently from what discussed for Proposition \ref{prop:FOSD_homogeneous}, here the absence of variability in the in--degrees is a problem for a monopolist facing a denser network. Indeed, the congestion is, in a sense, maximal within a given network structure: the non--informed consumers are all connected to the highest number of influencers present with positive probability. Serving a denser network exacerbates the congestion effect.  Thus, if the cost of communication is sufficiently high, we can always find situations in which going online requires so high bonuses  to be profit--detrimental compared to those  of the sparser network. 

The cost's threshold increases in the proportion of non--informed consumers. As the informational problem gets more severe the monopolist is more likely, for given network, to use bonuses to generate new demand.

Finally, we need the fraction of senders not to decrease (too much) in relation to the sparser network  for the welfare to be higher when the monopolist moves online. Formally: 

\begin{proposition}\label{prop:INHOMOWELF}
In the case described in Proposition \ref{prop:HOMOIN},  $\frac{k_g}{k'_g}<
1-\frac{\log{[2-\rho']}}{\log{(2)}} $ is a sufficient condition for the FOSD shift to be welfare enhancing. \end{proposition}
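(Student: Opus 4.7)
The plan is to reduce the welfare comparison to a monotonicity statement about $\Gamma(\cdot)$ in the homogeneous in-degree case, and then pick the worst case for the pre-shift $\rho$ to obtain a sufficient (rather than tight) condition on the FOSD shift.

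First I would note that by Equation \eqref{eq_welfare} welfare is an affine increasing function of $\Gamma(L) = \Gamma(\rho/2)$, and that the term $(1-\beta)$ from originally informed consumers is invariant under the network shift. So welfare moving online strictly exceeds welfare offline if and only if $\Gamma'(\rho'/2) > \Gamma(\rho/2)$, where $\Gamma$ and $\Gamma'$ are the aggregation functions in \eqref{eq:Gamma} before and after the shift, respectively. Under the homogeneous in-degree assumption of Proposition \ref{prop:HOMOIN}, $\Gamma(L) = 1 - (1-L)^{k_g}$ and $\Gamma'(L) = 1 - (1-L)^{k'_g}$, so the welfare-increase inequality becomes
\begin{equation*}
\bigl(1 - \rho'/2\bigr)^{k'_g} \;<\; \bigl(1 - \rho/2\bigr)^{k_g}.
\end{equation*}

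Second, since $\rho \in [0,1]$, the right-hand side is a decreasing function of $\rho$ and attains its minimum value $(1/2)^{k_g}$ at $\rho = 1$. Thus a condition that does not depend on the particular $\rho$ chosen offline is obtained by asking that the worst case still gives a strict increase, i.e.
\begin{equation*}
\bigl(1 - \rho'/2\bigr)^{k'_g} \;<\; (1/2)^{k_g}.
\end{equation*}
Taking logarithms of both sides (both are in $(0,1)$, so the logs are negative and the inequality direction is preserved after multiplying by $-1$) yields
\begin{equation*}
k'_g\bigl[\log 2 - \log(2-\rho')\bigr] \;>\; k_g \log 2,
\end{equation*}
where I used $1 - \rho'/2 = (2-\rho')/2$. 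Dividing through by $k'_g \log 2$ gives exactly the stated condition $\frac{k_g}{k'_g} < 1 - \frac{\log(2-\rho')}{\log 2}$, which completes the argument.

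There is no real obstacle here: because the in-degree distribution is a point mass, $\Gamma$ has a clean closed form and the comparison collapses to a one-line algebraic inequality; the only modeling choice is to bound $\rho$ by its maximum $1$ in order to get a condition expressible purely in terms of $k_g$, $k'_g$, and $\rho'$. The trade-off, of course, is that the resulting condition is only sufficient and may be far from necessary when the offline $\rho$ is much smaller than $1$, but this is intrinsic to stating a clean bound that does not reference the offline equilibrium.
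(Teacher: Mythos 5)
Your proposal is correct and follows essentially the same route as the paper: reduce welfare to the comparison $(1-\rho'/2)^{k'_g} < (1-\rho/2)^{k_g}$, replace the offline $\rho$ by its worst case $\rho=1$ (the paper's ``limit situation where $\rho^*=1$''), and take logarithms to obtain the stated bound. Your added remark that $(1-\rho/2)^{k_g}$ is decreasing in $\rho$, which is what makes $\rho=1$ the right limit case and the condition genuinely sufficient for any offline equilibrium, is a useful explicit justification that the paper leaves implicit.
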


\begin{proof}
See Appendix \ref{INHOMOWELF}.
\end{proof} 

The only thing that matters for welfare considerations is the number of people receiving the information, which is an increasing function of $\rho$. Proposition  \ref{prop:INHOMOWELF} expresses a sufficient condition by focusing on the limit case in which every buyer  passes the information in the original network. If the increase in density 
makes the monopolist set incentives such that all buyers pass the information also in the denser network, then welfare is trivially enhanced as condition in Proposition \ref{prop:INHOMOWELF} is always satisfied.\footnote{ Notice that when the share of people passing the information in the original network is lower than one, we can also have welfare--favorable FOSD shifts in which more people are willing to pass the information in the denser network. } 
 On the other side, if less people pass the information at equilibrium, for the change to be welfare increasing it is necessary that the improvement in information diffusion entailed by the higher density overcomes the reduction in the number of  communicators. This always happens when the FOSD shift is sufficiently pronounced, so that the condition in Proposition \ref{prop:INHOMOWELF} is satisfied. 

\paragraph{General case.}
Let us now  focus on a more general case of generic out-- and in--degree distributions, while constraining the type of FOSD shift we consider.  In this situation, we show that it is possible to construct a network denser than the original one, so that the monopolist would move to that network, but welfare would be damaged. 
In the statement we use the notation $\Lambda(\rho^*)$ to identify the fraction of informed consumers that is consistent with $\rho^*$.
Formally: 

\begin{proposition}\label{prop:general}
Suppose that $k_{f \max}$ is finite, and that the monopolist chooses $\rho^*$ such that $\Gamma(\rho^*/2)=1-\delta$, with $\delta  <  \frac{2 c}{ \Lambda(\rho^*) \phi \left( \frac{\rho^*}{2}, \frac{1}{2} \right)}$.
In this case it is always possible to consider a FOSD shift of the network, such that the monopolist would go online, but welfare would be damaged by this choice.
\end{proposition}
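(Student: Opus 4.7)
The first step is to recast the proposition as two scalar inequalities. Writing $b^* := c/(\Lambda(\rho^*)\phi(\rho^*/2,1/2))$ for the original equilibrium bonus, so that $\Gamma(\rho^*/2)=1-\delta$ and the proposition's hypothesis is $\delta < 2b^*$, any FOSD shift that yields new equilibrium quantities $b'$ and $\delta' := 1-\Gamma'({\rho'}^*/2)$ gives the stated conclusion iff
\begin{equation*}
\left(\tfrac{1}{2} - b'\right)(1-\delta') > \left(\tfrac{1}{2} - b^*\right)(1-\delta) \qquad\text{and}\qquad \delta' > \delta .
\end{equation*}
Combining the two and taking $b'$ small opens the admissible window $\delta < \delta' < \delta + 2b^*(1-\delta)$, whose width $2b^*(1-\delta)$ is bounded below by $2b^*(1-2b^*)$ thanks to the hypothesis $\delta < 2b^*$. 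The plan is then to exhibit a FOSD shift that lands inside this window.

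\textbf{Construction.} I would use a ``hub'' deformation. Fix a small $\varepsilon > 0$ and a large integer $K$, and set
\begin{equation*}
f'(k) = f(k) - \varepsilon\,\mathbf{1}_{\{k=k_{f\min}\}} + \varepsilon\,\mathbf{1}_{\{k=K\}},\qquad g'(k) = g(k) - \eta\,\mathbf{1}_{\{k=k_{g\min}\}} + \eta\,\mathbf{1}_{\{k=K'\}},
\end{equation*}
with $K' > k_{g\max}$ and with $\eta$ determined by the matching requirement $(1-\beta)\varepsilon(K - k_{f\min}) = \beta\eta(K' - k_{g\min})$, so that the consistency condition~\eqref{eq:consistency} is preserved. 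Both $f'$ and $g'$ first-order stochastically dominate their unprimed versions, since mass is only transported to the right, and $k'_{f\max}=K$ is finite, so the setting of the proposition is maintained. For every $\rho'$ in the interval $(0, \varepsilon K / E_{f'}[k])$ the new threshold satisfies $\Lambda'(\rho') = K$, so targeting only the new hubs is a feasible strategy.

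\textbf{Verification and main obstacle.} At the hub-targeting candidate $\rho^{h}$ the required bonus $b^{h} = c/(K\phi'(\rho^{h}/2,1/2))$ tends to $0$ as $K\to\infty$, while $\delta^{h} := 1 - \Gamma'(\rho^{h}/2)$ can be tuned via the free parameters $(\varepsilon,K,K',\eta)$ to lie strictly inside $(\delta,\; \delta + 2b^*(1-\delta))$. This single candidate's profit already strictly exceeds the original optimum, so the monopolist's new maximum is strictly larger and she does move online. The main obstacle is that this does not yet show that her \emph{actual} new optimum ${\rho'}^*$ yields $\Gamma'({\rho'}^*/2) < 1-\delta$; a priori she might prefer a larger $\rho'$ that inherits the higher $\Gamma'$ coming from a denser $g'$. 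I would close this gap by calibrating the $g'$-shift aggressively (driving $K'$ large relative to $\varepsilon K$): because a FOSD shift in $g$ strictly decreases each summand $(1-(1-L)^k)/(kL)$ in $\phi$, every $\rho' > \rho^{h}$ suffers a steep increase in bonus cost $c/(\Lambda'(\rho')\phi'(\rho'/2,1/2))$, while the hub candidate with its vanishing $L$ is almost unaffected (since $\lim_{L\to 0}(1-(1-L)^k)/(kL)=1$ for each $k$). Tuning $K'$ so that the continuation profit along the entire ray $\rho' > \rho^{h}$ falls below the hub value forces the monopolist's global optimum into the welfare-damaging regime, and the lower semi-continuity of $\Lambda'$ is harmless because the new hub mass is concentrated at the isolated value $K$, yielding a clean plateau of $\Lambda'$ there.
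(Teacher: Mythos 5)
Your skeleton coincides with the paper's: plant super--hubs in $f$, send the hub degree to infinity so that the hub--only bonus $c/(\bar k\,\phi')$ vanishes, tune the construction so that the hub--only $\Gamma'$ lands just below $1-\delta$, and invoke the hypothesis $\delta<\tfrac{2c}{\Lambda(\rho^*)\phi(\rho^*/2,1/2)}$ to show the margin gain beats the loss in $\Gamma$. Your first two blocks are essentially Steps 1--2 of the paper's proof. The genuine gap is in your closing block, the analogue of the paper's Step 3 (ruling out that the monopolist's new optimum is some $\rho'>\rho^{h}$ with higher $\Gamma'$). Your mechanism is to make congestion explode for every non--hub strategy by ``driving $K'$ large relative to $\varepsilon K$.'' But your own matching requirement, which enforces \eqref{eq:consistency}, pins the total mass of new in--links: $\beta\eta(K'-k_{g\min})=(1-\beta)\varepsilon(K-k_{f\min})$, and the right--hand side is held essentially fixed once you calibrate $\varepsilon K$ to place $\rho^{h}=\varepsilon K/E_{f'}(k)$ (and hence $\Gamma'(\rho^{h}/2)$) inside the admissible window below $1-\delta$. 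Hence $K'\to\infty$ forces $\eta\to 0$, the perturbation of $g$ becomes negligible, and $\phi'(L)\to\phi(L)$ for every $L$ bounded away from zero (the gap is at most $\eta/(2k_{g\min}L)$). There is therefore no ``steep increase in bonus cost'' along the ray $\rho'>\rho^{h}$: the old optimal threshold $\Lambda(\rho^*)$, replayed in the new network, activates weakly more links, attains $\Gamma'\ge 1-\delta$, and pays essentially the old bonus, and nothing in your argument prevents it from dominating the hub strategy --- in which case the monopolist goes online and welfare \emph{improves}, contradicting the conclusion.

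A second, related slip: you argue the hub candidate is ``almost unaffected'' by congestion because its $L$ is vanishing, invoking $\lim_{L\to0}(1-(1-L)^k)/(kL)=1$. But your profit step requires $\Gamma'(\rho^{h}/2)$ to sit in a window just below $1-\delta$, and since in your construction all but the $\eta$--mass of uninformed consumers keep their original low in--degrees, this forces $L'=\rho^{h}/2$ to be bounded away from zero; the hub strategy then faces the same congestion $\phi'(L')$ as any strategy generating that $L'$ --- its bonus is cheap only because of the $1/K$ factor, not because congestion is absent. The paper sidesteps both problems by shifting $g$ differently: it adds $m$ extra in--links to \emph{every} uninformed consumer rather than upgrading an $\eta$--fraction of them. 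That choice simultaneously (i) lets $\Gamma'$ at the hub--only activation level be tuned to any target value via $m$, and (ii) raises every receiver's in--degree by $m$, so that activating the ordinary out--links on top of the hubs' produces a genuine, quantifiable drop in $\phi'$, which is what eliminates all thresholds $\underline k<\bar k$. To close your argument you would need either to adopt a $g$--shift of that uniform kind, or to supply a quantitative bound showing that your $O(\eta)$ congestion increase suffices --- which it cannot in the limit $K'\to\infty$ that you propose.
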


\begin{proof}
See Appendix \ref{proof:general}.
\end{proof}

As formally explained in the proof of Proposition \ref{prop:general}, an example of FOSD shift causing the described misalignment of incentives is a scenario in which the shift is made by adding few very influencing people (super--hubs or web--influencers) to the original network, and redistributing the additional in--links homogeneously.
If the monopolist incentivizes only these super--hubs to diffuse information, the latter would suffer very limited congestion,  making their required bonus cheap. For this to be convenient with respect to a strategy getting on board also less connected individuals, super--hubs degree must be sufficiently high in order not to loose too much in terms of information diffusion. 
However, if  the proportion of receivers among uninformed consumers in the original network was sufficiently close to $1$, this actually reduces welfare. 

This general result points to the fact that the presence of few people with tremendously high levels of influence makes  a referral strategy very attractive to the monopolist. Indeed, even a very small reward would induce these people to pass the information with little congestion. So the presence of hubs favors the going--online solution as the increase in density 
does not entail a material loss in demand. As a result, the monopolist can reach many people even  by inducing only few people to communicate. 

In order to provide the results of the present section, we  imposed  restrictions which prevented us from  assessing  the exhaustivity of our results when such conditions do not hold. We thus recur to numerical solutions to confirm the importance of hubs for the decision of the monopolist to go online and to highlight the possible incentives' misalignments.

\subsection{Numerical results} \label{simulations}

In what follows, we discuss precise numerical solutions of our model for specific classes of in-- and out--degree distributions, typical of the theoretical literature on social networks (see \citealt{random2016handbook} for a survey related to the economic literature).
The first type is the random networks (\citealt{erdHos1959random}). These graphs are characterized by a given number of nodes $n$ and a given probability $0\leq \lambda \leq 1$, which describes the chance of each link between pairs of nodes to exist. When  $\lambda$ is assumed to be equal for each pair of nodes, these networks are characterized by a binomial distribution of degrees, i.e.,  $\forall 1\leq k \leq n-1$:
\begin{equation}
h(k)=\left(\!
 \begin{array}{c}
  n-1 \\
  k
 \end{array}
 \!\right)\lambda^k (1-\lambda)^{n-1-k},
 \label{binomial}
\end{equation}
where $\lambda n$ approximates the characteristic degree of nodes in the networks. In other terms, $\lambda$ can be considered as a measure of network density. 

The second type of degree distribution we discuss characterizes networks defined as scale-free due to the tendency of the standard deviation of the degrees to diverge. This type of construction does fit many of the characteristics of empirical social networks, in particular the observation that a lot of them approximately follow a power--law degree distribution (for specific examples see \citealt{ugander2011anatomy,ebel2002scale,liljeros2001web,barabasi2002evolution,yu1965networks,albert1999internet}).
  Formally, we study networks with degree up to $n$ and degree distribution given by:
\begin{equation*}
h(k)=\frac{1/k^{\gamma}}{\sum\limits_{n\in N}(1/n^{\gamma})},
\end{equation*}
where $2< \gamma \leq 3$ represents the slope of the power law.\footnote{In between these two values the first moment of the degree distribution is finite, but the second and higher moments diverge as the network size becomes infinite. The boundaries are justified by the fact that most empirically observed networks exhibit a slope between these two values, which implies a ultra small--world network (\citealt{cohen2003scale}).  Notice that increasing the parameter $\gamma$ implies lowering the probabilities to observe highly--connected individuals, thus leading to sparser networks.}

Scale--free networks entail the presence of hubs, i.e., nodes with very high degree with respect to the networks' average, while in random networks these disproportionally--connected  nodes  are extremely rare. For this reason, in what follows we refer to the first type of networks as networks with hubs and to the latter as networks without hubs.  

Considering these network structures, we assess the impact of the FOSD shift that results from increasing $\lambda_{g}$ (for the random network) and decreasing $\gamma_{g}$ (for the scale--free networks) of $2.5\cdot 10^{-2}$, without changing $k_{f \max}$ and $k_{g \max}$. In both cases we are thus making each possible link between arbitrary agents $i$ and $j$ more likely to exist.

In sum, the numerical solutions of the model support the following claim.
\begin{claim}
Moving online becomes gradually more likely to be profit--  and welfare--enhancing as we move from a network without to a network with hubs. 
\end{claim}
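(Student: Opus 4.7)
The claim is numerical in nature, so the plan is to support it by a systematic grid search over the parameters of the two degree--distribution families, comparing the sign of the profit and welfare differential induced by the stipulated FOSD shift. First I would fix a network size $n$ and a fraction $\beta$ of uninformed consumers, and for each pair $(f,g)$ from the binomial family \eqref{binomial} (parametrised by $\lambda_f,\lambda_g$) or the truncated power--law family (parametrised by $\gamma_f,\gamma_g$) compute $\Gamma(\rho/2)$ and $\phi(\rho/2,1/2)$ from the closed forms in \eqref{eq:Gamma} and \eqref{eq:phi}. Then, using \eqref{eq:monopolist_problem4}, I would numerically maximise $\pi(\rho)$ over the discrete grid of admissible $\rho$--values (those consistent with some $\underline{k}\in\{k_{f\min},\dots,k_{f\max}\}$ via $\Lambda(\rho)$), obtaining $\rho^*$, the profit $\pi^*$, and the welfare $W^*$ from \eqref{eq_welfare}. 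This gives one data point; the shift is obtained by perturbing $\lambda_g$ upward (random case) or $\gamma_g$ downward (scale--free case) by $2.5\cdot 10^{-2}$, re--balancing the out--degree distribution through \eqref{eq:consistency}, and recomputing $(\rho^{*\prime},\pi^{*\prime},W^{*\prime})$.

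The comparison is then organised into two indicators: the sign of $\Delta\pi=\pi^{*\prime}-\pi^*$ (is the move profitable?) and the sign of $\Delta W=W^{*\prime}-W^*$ (is it welfare--enhancing?). I would tabulate the joint frequencies of these signs across a dense parameter grid, separately for the two network classes. The claim asserts that, relative to the random--network case, the scale--free case shifts mass toward the $(\Delta\pi>0,\Delta W>0)$ quadrant and away from the misaligned quadrants $(\Delta\pi<0,\Delta W>0)$ and $(\Delta\pi>0,\Delta W<0)$. To visualise this, I would produce heat maps of $\Delta\pi$ and $\Delta W$ over the parameter plane for both families, as well as the region where the two signs coincide.

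The economic interpretation that ties the numerics back to the earlier analytical results is the role of super--hubs emphasised in Proposition \ref{prop:general}: in the scale--free family the presence of nodes with very high in--degree makes $\phi$ relatively insensitive to $\rho$ at low $\rho$ (the hubs absorb congestion), while $\Gamma$ is pushed close to one by the hub out--degrees. Thus the monopolist's marginal cost of additional diffusion stays small in the denser network, aligning her optimum with the social planner's interest in maximal $\Gamma$. In the random--network case, by contrast, the near--homogeneity of degrees brings us close to the setting of Proposition \ref{prop:HOMOIN}, where denser networks intensify congestion and can make the monopolist prefer to stay offline even when welfare would rise. I would therefore complement the sign tables with a plot of $\phi(\rho^*/2,1/2)$ versus network density for each class, to exhibit the mechanism.

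The main obstacle is computational rather than conceptual: the objective $\pi(\rho)$ is discontinuous in $\rho$ through $\Lambda(\rho)$, and the scale--free family has a heavy tail, so for moderate $n$ the maximisation must be performed over all jump points of $\Lambda$ to avoid spurious local optima, and the FOSD shift must be constructed so that \eqref{eq:consistency} holds exactly (not only approximately) after perturbation; otherwise the measured $\Delta\pi$ and $\Delta W$ would be contaminated by rescaling artefacts. A secondary difficulty is ensuring robustness: I would repeat the exercise for several values of $(n,\beta,c)$ and verify that the qualitative pattern (increasing alignment as we move from no--hubs to hubs) is monotone in a suitably defined measure of tail heaviness, e.g.\ the coefficient of variation of $g$.
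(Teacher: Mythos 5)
Your proposal is correct and follows essentially the same route as the paper: the claim is supported there by exactly this kind of numerical grid search (over $\lambda_g\in[0.05,0.95]$ or $\gamma_g\in[2.1,3]$ and $\beta\in[0.5,0.95]$, with $n=10000$, $c=0.06$, and a FOSD shift of $2.5\cdot10^{-2}$), reading off the signs of the profit and welfare differentials across the three in--/out--degree combinations in Figure \ref{sc_fosd} and interpreting the pattern through the congestion--absorbing role of hubs as in Propositions \ref{prop:HOMOIN} and \ref{prop:general}. Your added robustness checks (varying $n$ and $c$, enforcing \eqref{eq:consistency} exactly, maximising over all jump points of $\Lambda$) are sensible refinements but not a different method.
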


 \begin{figure}[h!]
 \centering
       \includegraphics[width=.32\textwidth]{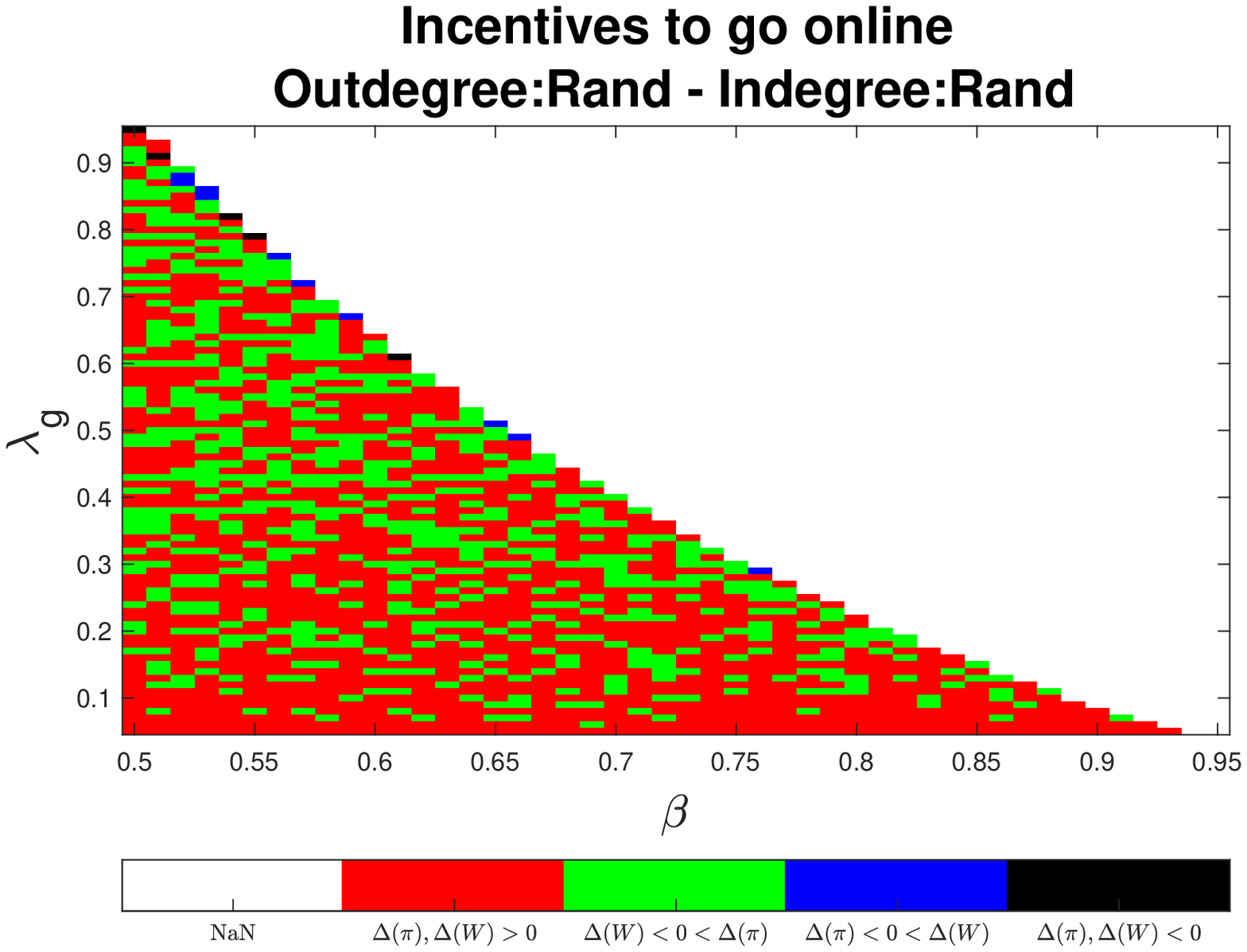}
        \includegraphics[width=.32\textwidth]{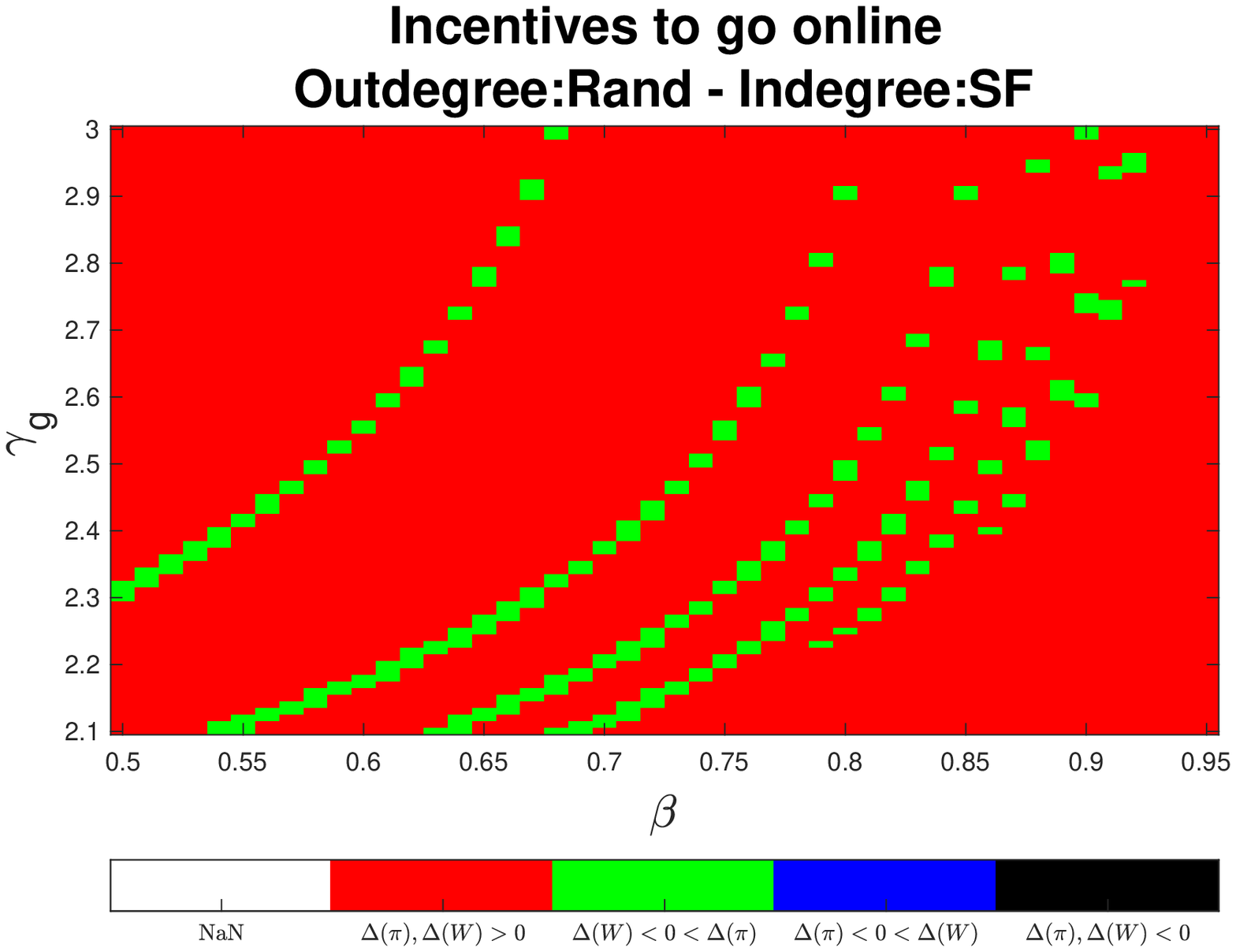}
         \includegraphics[width=.32\textwidth]{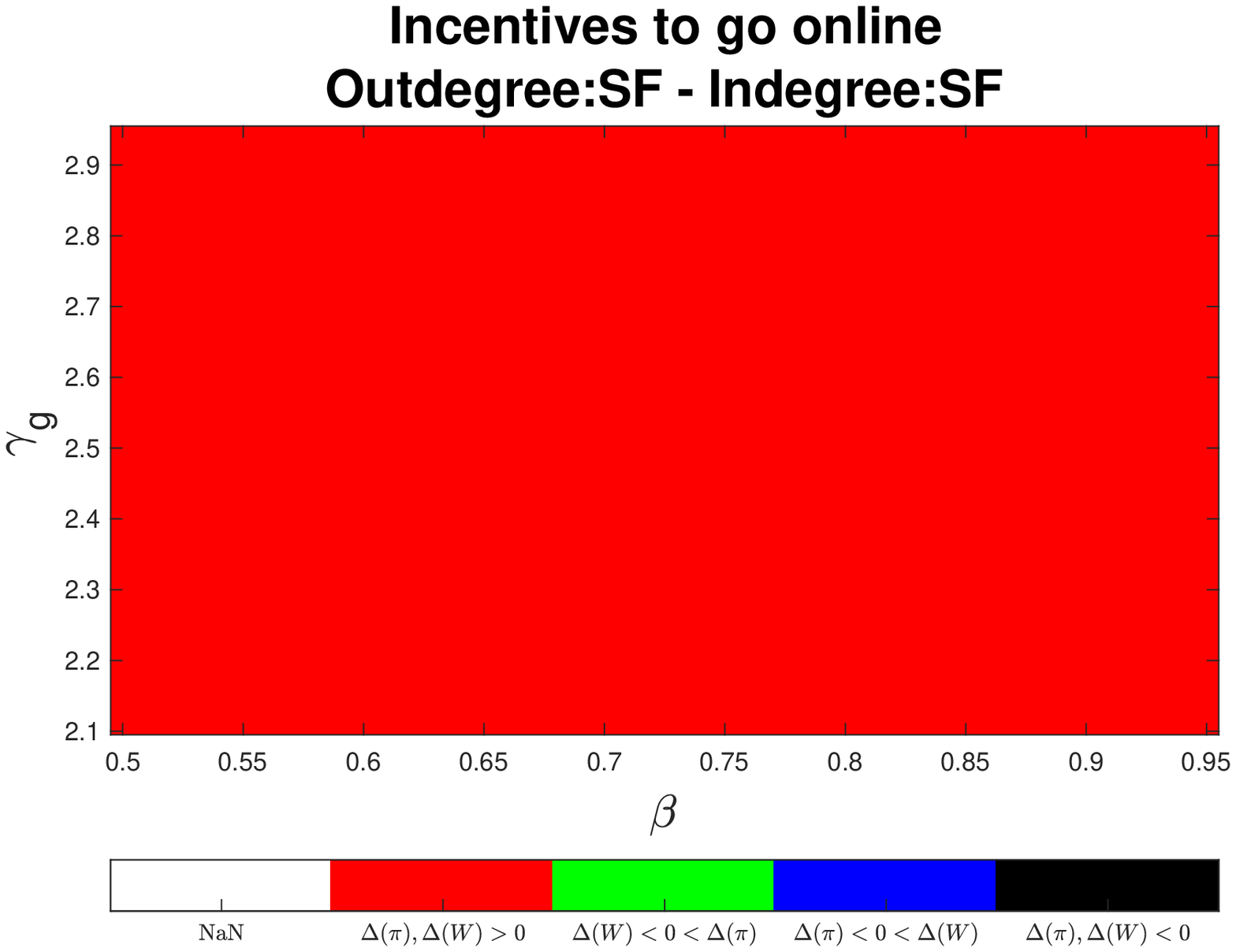}
        \caption{Effect of FOSD shift on profits and welfare. Left Panel: In-- and out--degree distributions  are random. Central Panel: Out--degree distribution  is random while the in--degree one  is scale-free. Right panel: In-- and out--degree distributions  are scale free.
For random in--degrees, we consider   FOSD shift of $2.5\cdot 10^{-2}$ for all combinations of starting $\lambda_g\in[0.05,0.95]$ in steps of $0.01$. For scale free in-degrees,  FOSD shift of $2.5\cdot 10^{-2}$ for all combinations of starting $\gamma_g\in[2.1,3]$ in steps of $0.01$. In both cases $\beta\in[0.5,0.95]$ in steps of $0.01$.  Number of agents: $10000$, $c=0.06$, $k_{max}=n-1$, $k_{min}=1$.  }
        \label{sc_fosd}
\end{figure}

When there are no hubs, the monopolist  balances the information diffusion  and the congestion inducing influencers  with a degree higher than the characteristic one to diffuse information. This is true for any given network density. As shown in Figure \ref{sc_fosd} (left panel) going online is surely profitable for the monopolist as far as  the original distribution of in--links is not too dense. When in--links are very dense, going online becomes less likely to be attractive than staying offline (blue and black areas in left panel). 

From the welfare standpoint, the picture is much richer and the observable absence of a  pattern essentially depends on discontinuous changes in the optimal $\underline{k}$, which  affect the amount of  information circulating in the network. When going online, the monopolist anticipates the stronger congestion. This may result in two reactions. The first one is to move  the optimal targeted $\underline{k}$ upwards so to partially offset the increase in congestion. This makes less information circulate in the denser network, with detrimental welfare consequences (green areas). The second one corresponds to cases in which the congestion effect prevails. Then, the monopolist does not want to move to the denser network and this can be in line with welfare (black) or not (blue).

Considering cases where the in--degree distribution presents hubs (central and right  panels of Figure \ref{sc_fosd}), moving to a denser network is always profit--increasing. This follows from two facts. Let us consider an originally \emph{sparse} network with few hubs attracting most of the inflow of information: these people are clearly  the main source of congestion for influencers. Once a denser network is faced, the increase in congestion is small relatively to the previous case, as these ``monopolizers'' of information still remain the main source of congestion.  Therefore, the monopolist will move online, since the positive effect on information diffusion associated to this movement  largely outweighs the additional cost of congestion. 

However, welfare--decreasing scenarios still occur. Indeed, the monopolist might want to reduce the proportion of senders to limit congestion and this can result in a reduction in the number of influenced people. This misalignment of incentives disappears when  hubs are present also in the out--degree distribution (Figure \ref{sc_fosd}, right panel). This result spurs from a reasoning similar to the one regarding influenced hubs. More precisely, influencing hubs are, for given network density,  the main source of informational spread. Therefore, a FOSD shift does not imply a strong change in the optimal behaviour of the monopolist, as those hubs still remain the most important diffusers. So the denser network always guarantees a higher spread of information at equilibrium.

\section{Conclusion}\label{conclusions}


Network--related referral strategies are an important demand--generating tool for today's companies.
The use of these strategies in many different markets has boomed thanks to online platforms and social media. Indeed, the latter enhance the outreach efforts of companies, giving them an effective and relatively cheap way to inform consumers about the products they sell.  Referral strategies look like win--win solutions when a seller faces a partially uninformed pool of consumers  as the former increases its demand incentivizing informed consumers with rewards, while uninformed people are made aware about possibly valuable products.


Many potential channels are available for companies interested in introducing referral programs. OSNs, sms, mails, instant messaging  are all means that can, together,  make a firm known by consumers in a very effective way. Nevertheless, frequently companies do not exploit all these possible channels but rather limit their strategy to some channels or use offline word--of--mouth. At first sight, a more penetrative strategy, maximizing information diffusion by fully exploiting all informed consumers' links seems to be trivially profit maximizing, especially since most firms condition influencers' rewards to the purchase of influenced people. 


In this paper, we show that the rationale for firms to limit the scope of potential activation of influencers' ties (or, to put it simpler,  to stay offline) is congestion, i.e., the fact that, in a network, non--informed people may receive information by multiple sources. As a result, the more numerous the buyers who become active influencers, the harder it is --  for each of them -- to get a bonus. Therefore, a seller has to consider the trade--off between demand size (informational spread)  and the marginal cost of attracting that demand (giving right incentives through bonuses). On this regard, network density is crucial and has two opposite effects. On the one hand,  it makes congestion stronger,  thus accruing the cost of providing incentives to communicate. On the other hand, it increases the information spread by influencers for fixed incentives.  When the first effect dominates, the seller prefers to  stay offline, somehow limiting the new demand attracted. Oppositely, if the latter demand effect prevails, maximizing the outreach becomes the dominant concern of the seller.


The attractiveness of bonuses is enhanced by the presence of few very connected influencers (\textit{hubs}), who are cheaper to incentivize and that guarantee a large information diffusion with little congestion. This is true for any given network density and drives the choice of the seller. 
When going online provides super--hub consumers that are not present offline, a company would always move to the denser network. In contrast, if online everybody is (homogeneously) more connected, then the seller might prefer to stay offline, where congestion is less severe.


This paper highlights the potential misalignment between seller's incentives and social welfare. 
Clearly, the latter is only concerned by information diffusion, as influenced people become able to buy something  whose existence they ignored. From the welfare viewpoint the bonuses paid by the monopolist are just a transfer and the thus the concerned trade--off  disappears. The prevalence of situations in which a misalignment emerges strongly depends on  the structure of the problem faced by the monopolist and in particular on the type of in-- and out--degree distributions. We provide conditions for this misalignment to appear in various situations with increasing levels of generality. When in--degrees are homogeneous, it is possible to find both cases of alignment (when both welfare and profits are increased by going online) and cases in which the monopolist prefers to stay offline at the welfare expenses. The prevalence of the two cases depends  essentially on the degree of homogeneity in influencers' degree distribution. 
When both influencers and influenced networks have homogeneous degree distributions, the seller always goes online and enhances welfare as this entails higher demand for any given level of communicators, without increasing the cost of word--of--mouth stimulation.  Differently, when the network of influencers is not necessarily homogeneous, it is always possible to find situations in which, if the cost of communication is sufficiently high, going online  induces too much congestion. 

Moreover, in presence of online super--hubs  the seller goes online and chooses to diffuse information only through web--influencers even at the expenses of reducing diffusion of information and welfare.  The crucial role of hubs is confirmed by an extensive analysis of numerical solutions on a wide class of networks.   When hubs are not important in neither networks any outcome can potentially emerge, i.e, we can have situations in which online/offline is preferred by welfare/seller, depending on parameters. The picture changes when offline there are few hubs among the influenced people (with out--degree distribution still symmetrically centered). These hubs are the main source of congestion. Hence,  their presence offline makes going online less consequential in terms of congestion thus making the move more appealing to the seller. In this case the seller always goes online, but this move can still reduce informational spread and thus welfare.  Finally, if disproportionately connected individuals are also present in the out--degree distribution, the incentives of the seller to mitigate congestion online are less substantial and this  aligns private and welfare incentives. 




The situation described by our paper is characterized by the presence of a monopolistic seller. However the setup can be extended to the case of a duopoly \emph{\`a la} Bertrand. In such setup the sellers compete for uninformed consumers receiving second--hand information through their network while informed consumers would enjoy the usual Bertrand's pressure on prices. As discussed in \ref{competition}, our paper endogenizes the size of the two pools of fully--informed and partially--informed consumers in \cite{burdett1983equilibrium}. They study a market in which firms are monopolists for partially--informed  consumers (who know only one good) and compete in prices for the fully--informed, creating price dispersion. In the light of our model, some initially uninformed consumers receive information about only one product and some others about both products. The density of the network becomes then crucial as a denser network entails a higher probability of being informed by both producers, creating an additional pressure on prices at the benefit of consumers.

\bibliographystyle{chicago}
\bibliography{BibFile}

\setcounter{section}{1}

\appendix

\global\long\def\thesection{Appendix \Alph{section}}
 \global\long\def\thesubsection{\Alph{section}.\arabic{subsection}}
 \setcounter{proposition}{0} \global\long\def\theproposition{\Alph{proposition}}
 \setcounter{definition}{0} \global\long\def\thedefinition{\Alph{definition}}

\section{}

\subsection{Proof of Proposition \ref{prop:all1/2}, page \pageref{prop:all1/2}} \label{proof_prop:all1/2}
Only agents with a sufficient out--degree will pass the information for given bonus. Therefore, by choosing $b$, the monopolist indirectly chooses the amount of informed consumers that will pass the information. The problem of the monopolist is equivalent to fixing an optimal level of $L^{*}$ or equivalently  $\rho^*$,\footnote{Notice that $L^{*}$ will then be $(1-p_{1}^{*})\rho^{*}$.}  and there is a one--to--one correspondence between $\rho^* \in [0,1]$ and a set of optimal $b^*$'s. 
So, the problem of the monopolist is to maximize the objective function:
\begin{equation}
 \label{eq:monopolist_problem2}
\pi(p_1,p_2,b)  = 
(1-\beta)  p_1 (1-p_1) + \beta  (p_2-b) (1-p_2) \Gamma(L) \ \ ,  
\end{equation}
and each solution to this problem is such that $p_1^*,p_2^* \in [0,1/2]$, and $b \in [0,p_2^*]$. \\

To maintain incentive compatibility for the consumers, $\rho$, in equilibrium, must satisfy  \eqref{threshold_k}, rewritten as
\begin{equation} \label{threshold_rho}
\Lambda(\rho) \phi (L,p_2) = \frac{ c}{b} \ \ .
\end{equation}
Both terms in the left--hand--side of \eqref{threshold_rho} are decreasing in $\rho$, so there is a unique $b$ satisfying each value of $\rho$.\footnote{%
Since $\Lambda(\rho) \phi ((1-p_1)\rho,p_2) $ is discontinuous, there are instead some values of $b$ for which no $\rho $ satisfies \eqref{threshold_rho} with equality. However, those values for $b$ are clearly dominated choices for the monopolist.} 

So, from  \eqref{threshold_rho}, the problem of the monopolist is equivalent to maximizing
\begin{equation}
 \label{eq:monopolist_problem3}
\pi(p_1,p_2,\rho)  = 
(1-\beta)  p_1 (1-p_1) + \beta  \left( p_2-   \frac{c}{ \Lambda(\rho) \phi ((1-p_1)\rho,p_2)  } \right) (1-p_2) \Gamma( (1-p_1) \rho) \ \ .
\end{equation}
Intuitively, an increase in the number of active links $L$ entails both an improvement in the flow of information - measured by $\Gamma (L)$ - and an increase in its congestion, i.e.~more competition for successfully passing the information - measured by the decrease in $\phi ((1-p_1)\rho,p_2)$.

The objective function \eqref{eq:monopolist_problem3} is discontinuous, but since it is bounded and has a finite number of jumps, it satisfies the extreme value theorem in each of the finite intervals. So, it has a solution which is the maximum of the maxima for each interval.

%
%
Let us take objective function \eqref{eq:monopolist_problem3}, and suppose that the monopolist chooses $\rho$ instead of $b$ (that will now depend endogenously on  the other parameters).
We consider optimal $p_1^*$ and $p_2^*$ for each given $\underline{k}$.

To compute optimal $p_2^*$, a necessary condition is to satisfy FOC:
\begin{eqnarray}
1-2p_2 & = & \frac{d}{d p_2} \left(  b (1-p_2) \right) \nonumber \\
& = &
-b + (1-p_2) \frac{\partial b}{\partial \phi} \frac{d \phi}{d p_2} \nonumber \\
& = & 
-b + \underbrace{(1-p_2) \left(  - \frac{c}{\phi^2 \underline{k}} \right)
\left(  - \sum_{k=1}^{\infty} g(k)  \left( \frac{1-(1-L)^k}{k L} \right) \right) }_{= \frac{b}{\phi} \cdot \phi} \ \ , \nonumber
\end{eqnarray}
but RHS is just $0$, so that $p_2^* = \frac{1}{2}$, independently on any other variable.\footnote{%
This does not even depend on the uniform distribution for the reservation price, which gives $(1-p_2)$, as this element is part of $\phi$ in both terms of the cancelation.}

To compute optimal $p_1^*$, we distinguish two cases.
First, optimal $\underline{k}=k_{f \max}+1$ and consequently $L=0$.
In this case there are no network effects and the optimal $p_1^*$ is trivially $\frac{1}{2}$.

Second case, $\underline{k} \leq k_{f \max}$ and $L>0$.
Now, a necessary condition   is to satisfy FOC:
\begin{eqnarray}
1-2p_1 &= & (1- p_2) \frac{\beta}{(1-\beta)} \left(  \Gamma(L) \frac{\partial b}{ \partial \phi} \frac{\partial \phi}{ \partial L} \frac{d L }{d p_1} -
 \left( p_2 - b \right) \frac{\partial \Gamma}{\partial L} \frac{d L }{d p_1} \right)  \nonumber \\
& = &
 - \frac{1}{2} \frac{\beta}{(1-\beta)} \frac{d L }{d p_1} 
\left(  \left( \frac{1}{2} - b \right) \frac{\partial \Gamma}{\partial L}  - \Gamma(L) \frac{\partial b}{ \partial \phi} \frac{\partial \phi}{ \partial L}    \right).
 \ \ 
\label{eq:optimal_p1}
\end{eqnarray}

The piece 
\[
\left( \frac{1}{2} - b \right) \frac{\partial \Gamma}{\partial L}  - \Gamma(L) \frac{\partial b}{ \partial \phi} \frac{\partial \phi}{ \partial L}    
\]
is actually equal to $\frac{d \Pi}{d  L}$: the marginal effect of $L$ on profits.

If the optimal $\rho^* $ is not a point of discontinuity for $\Lambda(\rho)$ (and then consumers at $\underline{k}$ play different strategies), then $\frac{d \Pi}{d  L} = 0$, and then $p_1^{*}=1/2$. \\
If instead  $\rho $ is a point of discontinuity for $\Lambda(\rho)$  (and then all consumers at $\underline{k}$ pass the information),  then $\frac{d \Pi}{d  L}$ is not necessarily null.
However, consider that the objective function \eqref{eq:monopolist_problem3} is lower semicontinuous, because the only non--continuous element in it is $\Lambda(\rho)$, which is lower semicontinuous.
Because of that,
 $\frac{d \Pi}{d  L}$ cannot be negative in some point of discontinuity, otherwise $\rho^*-\epsilon$ would be a better choice than $\rho^*$. 
Looking at Equation \eqref{eq:optimal_p1}, since $\frac{d L }{d p_1}<0$, we obtain that $p_1^{*} \leq 1/2$.  
\eproof
%

\subsection{General distribution}\label{GENERALf}

Consider reservation values to be distributed according to a continuously differentiable  distribution with c.d.f.  $H(\cdot)$ and density $h(\cdot)$. In order to guarantee concavity in prices, the distribution is assumed  to have monotonically increasing hazard rate, i.e., $\frac{\partial\left( \frac{v}{1-H(v)}\right)}{\partial v}>0$.
The problem of the monopolist then reads: 
\begin{equation}
\pi(p_1,p_2,b)  = 
(1-\beta)  p_1 (1-H(p_1)) + \beta  (p_2-b) (1-H(p_2)) \Gamma(L), \nonumber
\end{equation}
 where 
 
 \begin{equation}
\Gamma(L) = \sum_{k=1}^{\infty} g(k) \left(  1-(1-L)^k \right) = 1 - \sum_{k=1}^{\infty} g(k) (1-L)^k \ \ . \nonumber
\end{equation}

\begin{eqnarray}
\phi (L,p_2) &  = & (1-H(p_2))  \sum_{k=1}^{\infty} g(k)  \left( 1 - \sum_{t=0}^{k-1} {k-1 \choose t} L^t (1-L)^{k-1-t} \frac{t}{t+1} \right) \nonumber
\\
& = & (1-H(p_2))  \sum_{k=1}^{\infty} g(k)  \left( \frac{1-(1-L)^k}{k L} \right).  \nonumber
\end{eqnarray}
 and 
 
\begin{equation}
L=[1-F(p_{1})]\rho \nonumber
\end{equation}

The condition in \eqref{threshold_rho} is again needed to guarantee communication to be incentive compatible. As  in Appendix \ref{proof_prop:all1/2}, we consider optimal $p_1^*$ and $p_2^*$ for each given $\underline{k}$. 
To compute optimal $p_2^*$, a necessary condition is to satisfy FOC:
\begin{eqnarray}
1-H(p_2)-h(p_{2}) & = & \frac{d}{d p_2} \left(  b (1-H(p_2)) \right) \nonumber \\
& = &
-h(p_{2})b + (1-H(p_2)) \frac{\partial b}{\partial \phi} \frac{d \phi}{d p_2} \nonumber \\
& = & 
-h(p_{2})b + \underbrace{(1-H(p_2)) \left(  - \frac{c}{\phi^2 \underline{k}} \right)
\left(  - h(p_{2}) \sum_{k=1}^{\infty} g(k)  \left( \frac{1-(1-L)^k}{k L} \right) \right) }_{= h(p_{2})\frac{b}{\phi} \cdot \phi} \ \ , \nonumber
\end{eqnarray}
 so that $p_{2}^{*}$ is equal to the monopoly price $p_{m} = \frac{1-F(p_{2}^{*})}{f(p_{2}^{*})}$.

Clearly, if optimal $\underline{k}=k_{f \max}+1$, then   $p_1^*=p_{m}$. When instead $\underline{k} \leq k_{f \max}$ and $L>0$, a necessary condition   is to satisfy FOC:
\begin{eqnarray}
1-H(p_1)-h(p_{1})p_{1} &= & (1- H(p_2)) \frac{\beta}{(1-\beta)} \left(  \Gamma(L) \frac{\partial b}{ \partial \phi} \frac{\partial \phi}{ \partial L} \frac{d L }{d p_1} -
 \left( p_2 - b \right) \frac{\partial \Gamma}{\partial L} \frac{d L }{d p_1} \right)  \nonumber \\
& = &
 -(1- H(p_m)) \frac{\beta}{(1-\beta)} \frac{d L }{d p_1} 
\left(  \left( p_{m} - b \right) \frac{\partial \Gamma}{\partial L}  - \Gamma(L) \frac{\partial b}{ \partial \phi} \frac{\partial \phi}{ \partial L}    \right)\nonumber \\
& = &
 (1- H(p_m)) \frac{\beta}{(1-\beta)}h(p_{1})\rho
\left(  \left( p_{m} - b \right) \frac{\partial \Gamma}{\partial L}  - \Gamma(L) \frac{\partial b}{ \partial \phi} \frac{\partial \phi}{ \partial L}    \right). \nonumber
 \ \ 
\end{eqnarray}

The factor  
\[
\left( p_{m} - b \right) \frac{\partial \Gamma}{\partial L}  - \Gamma(L) \frac{\partial b}{ \partial \phi} \frac{\partial \phi}{ \partial L}    
\]
is the marginal effect of $L$ on profits and it is positive  only when $\rho $ is a point of discontinuity for $\Lambda(\rho)$, otherwise it is null.
In points of discontinuity,  the monotone hazard rate assumption  implies  that $p_1^{*} < p_{m}$. So the results are in line with the one in Proposition   \ref{prop:all1/2}.

\subsection{Numerical analysis of $p_1$}
\label{numerical_analysis}

The price $p_1^*$ needs further clarifications.  It turns out that it is generically lower than $1/2$. This comes from the fact that $k$ is discrete, so that an  increase in the cost of communication $c$ would generate a discrete jump in the optimal $\underline{k}^*$, and the marginal effect of $L$ on the profit will be zero in interior points and can be higher than zero when $\rho^{*}$ is a discontinuity point of $\Lambda(\rho^*)$. For these reasons, when $\underline{k}^{*}$  moves in response to a marginal increase in $\rho$ the price $p_1$ suddenly jumps below $1/2$. 


Figure \ref{c_on_k} shows that the minimal number of ties $\underline{k}^*$ needed to communicate increases in sharp jumps  as $c$ increases. While such behavior is present for all network sizes (and indeed it never disappears), a unitary jump in $\underline{k}^*$  becomes progressively negligible with respect to  the  total network size. 
Moreover the $\Delta c$ from one jump to the next becomes smaller, an is negligible as $N \rightarrow \infty$.

The counterparts figures of Figure  \ref{c_on_k}, which map $p_1^*$ in function of $c$ for different network sizes, show how the price tends to $1/2$. Jumps are only present in points of discontinuity and vanish as network size increases. Indeed, jumps in the price are present in the order of $10^{-13}$ when the network size is $500$ (Left Panel)  and they become negligible (of the order of $10^{-133}$) when the network size is $5000$ (Right Panel), as it can be observed in Figure \ref{c_on_p1}.\footnote{ For networks of 50000 individuals jumps size falls below the maximum numerical precision of $10^{-308}$.} Let us notice that, by comparison with the markets sizes faced by most companies doing referral marketing, these numbers are very small.

 \begin{figure}[h!]
 \centering
        \includegraphics[width=.32\textwidth]{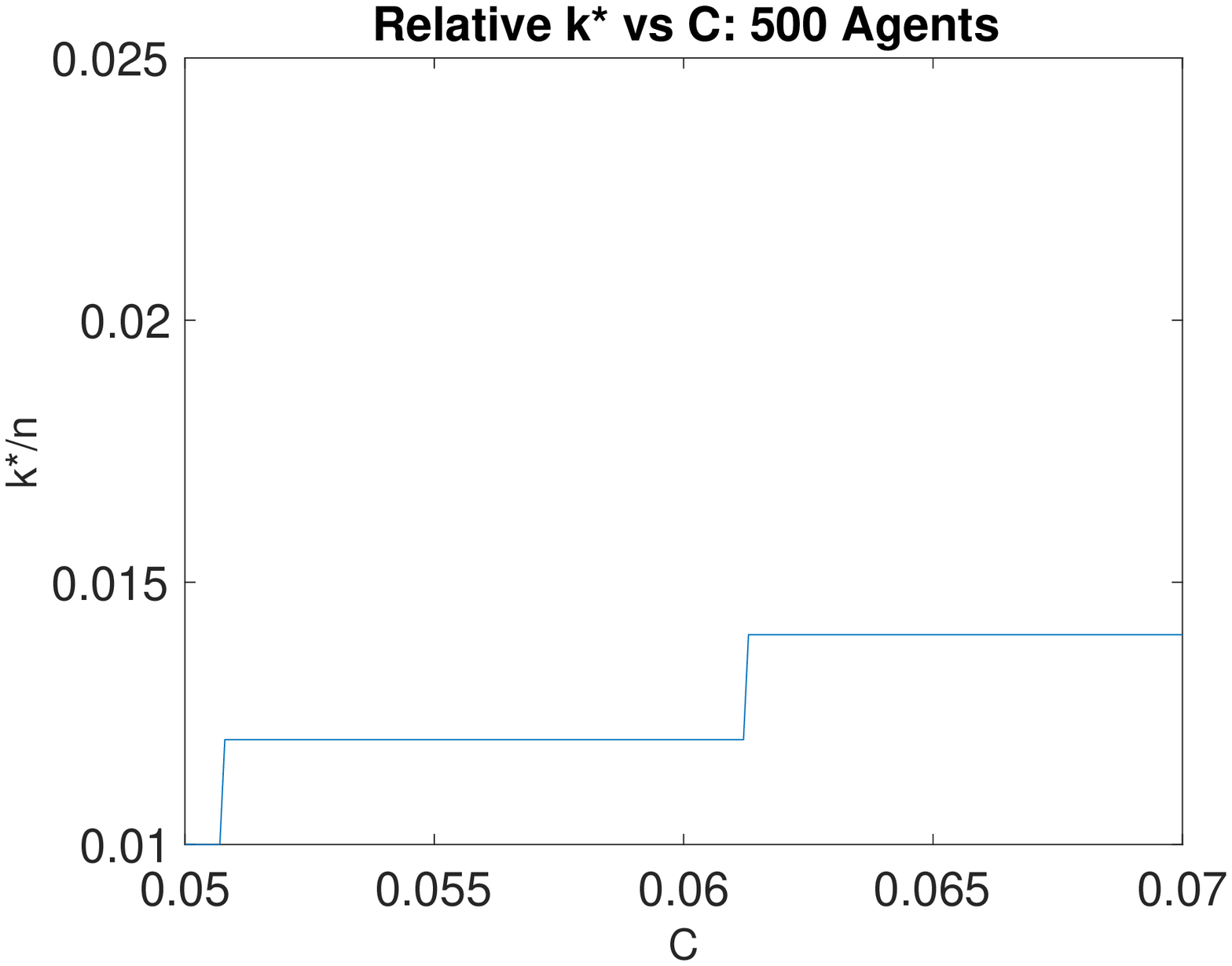}
        \includegraphics[width=.32\textwidth]{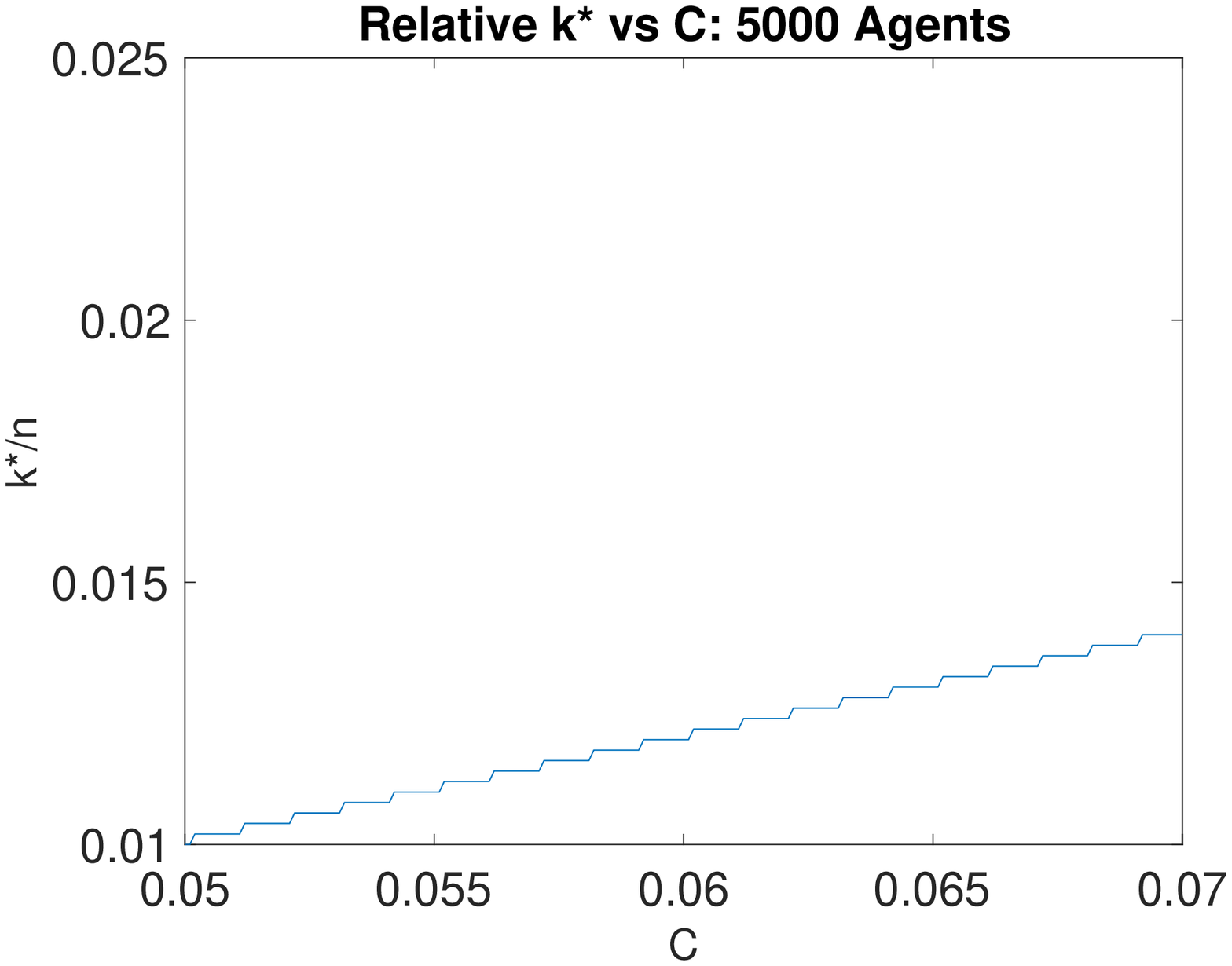}
        \caption{Effect of $c$ on $\underline{k}^*$. Results are reported  for a random network with probability of link equal to $0.1$ and  $\beta=0.3$. Left panel: $500$ agents. Right Panel: $5000$ agents.}
        \label{c_on_k}
\end{figure}

\newpage

\section{Other formal proofs}

\subsection{Proof of Proposition \ref{prop:misalignment}, page \pageref{prop:misalignment}} \label{proof_misalignment}

Since only the choice of   $\underline{k}$ is concerned in optimizing \eqref{eq:monopolist_problem3},  we  refer only to  the profit the monopolist makes on second--period consumers, expressing $b$ in terms of the other variables:  
\begin{equation}
\label{profit_piece}
\pi_2(1/2, \underline{k} )  = \frac{\beta}{2} \left( \frac{1}{2}-\frac{c}{\phi \underline{k}} \right)\Gamma(L)\ \ .
\end{equation}
The monopolist would maximize welfare if $\underline{k}=k_{f \min}$ and then $L=1/2$. 
In this case 
\[
\phi(L) = (1-p_2)  \sum_{k=1}^{\infty} g(k)  \left( \frac{1-(1-L)^k}{k L} \right) = \sum_{k=1}^{\infty} g(k) \left( \frac{1-\frac{1}{2}^k}{k} \right) \leq   E_k [ 1/k ] \ \ .
\]
However, if $\phi(L) <  \frac{2c}{k_{f \min}}$, then $\frac{1}{2}-\frac{c}{\phi k_{f \min}} <0 $, and so the part of profits from (\ref{profit_piece}) is negative. 
If instead the monopolist chooses $\underline{k} > k_{f \max}$, then $L=0$, and the part of profits from (\ref{profit_piece}) is null.
So,  $\underline{k}=k_{f \min}$ cannot be an optimal choice for the monopolist, because it is dominated by $\underline{k}=k_{f \max}$.
\eproof

\subsection{Proof of Proposition \ref{FOSDWelfare}, page \pageref{FOSDWelfare}} \label{proof_FOSDWelfare}

Since the threshold $\underline{k}$ remains fixed, a  FOSD shift from  $f(k)$ to $f'(k)$   makes 
\[
L = \frac{1}{2} \sum_{k=\underline{k}}^{\infty} f(k) \leq \sum_{k=\underline{k}}^{\infty} f'(k) = L'
\ \ .
\]
Therefore, each $1-(1-L)^k$ increases. 
In the expression of $\Gamma(L')$, as given by Equation \eqref{eq:Gamma},
$1-(1-L)^k$ is also an increasing function in $k$.
So, given that also $g'(k)$ FOS--dominates $g(k)$, we can use Definition \ref{def_fosd} and obtain
\[
\Gamma(L) = \sum_{k=1}^{\infty} g(k) \left(  1-(1-L)^k \right) \leq \sum_{k=1}^{\infty} g(k) \left(  1-(1-L')^k \right) \leq \sum_{k=1}^{\infty} g'(k) \left(  1-(1-L')^k \right) = \Gamma (L')
.\]
So, as 
$ \Gamma(L) \leq \Gamma (L') $,
welfare from Equation \eqref{eq_welfare} increases.
\eproof

\subsection{Proof of Proposition \ref{prop:FOSD_homogeneous}, page \pageref{prop:FOSD_homogeneous}}\label{FOSD_homogeneous}

Let us consider a network where all consumers have  in--degree equal to $k_g$ and out--degree equal to $k_f$.  From the consistency condition in \eqref{eq:consistency} it must hold that $(1-\beta)k_f=\beta k_g$.  Plugging into  \eqref{eq:L} and, in turns, into \eqref{eq:Gamma} and \eqref{eq:phi}, we get:

\begin{equation}
\label{system_hom_degree}
\begin{array}{c}
L=\rho/2,\\
\Gamma(\rho)=1-(1-\rho/2)^{k_g},\\
\phi(\rho/2, k_g)=\frac{1}{\rho k_g}\left(1-(1-\rho/2)^{k_g}\right)=\frac{1}{\rho k_g}\Gamma(\rho).
\end{array}
\end{equation}
Using consistency condition \eqref{eq:consistency}, the profit in Equation \eqref{profit_piece} becomes either null (if the monopolist does not use word of mouth)\footnote{ This will be the case whenever $\frac{c}{\phi(\rho)k_{f}}<1/2$ for all $\rho\in(0,1)$, so that profits made on uninformed consumers would  be negative.  } or: 
\begin{equation}
 \frac{\beta}{2} \left( \frac{1}{2}-  \frac{1-\beta}{\beta} \frac{\rho c }{\Gamma(\rho)} \right)\Gamma(\rho)
= \frac{\beta}{4} \Gamma(\rho) -  \rho (1-\beta) c   \ \ . 
\label{simple_objective}
\end{equation}
Notice that Equation \eqref{simple_objective} has a concave increasing part and a linearly decreasing one.
A FOSD shift would affect the former but not the latter. On the one hand,  in a denser network the function $\Gamma'(\rho)$ maps $\rho$ into higher values than the original $\Gamma(\rho)$. On the other hand, the FOSD shift does not entail any effect on the linear part, which depends only on the slope $(1-\beta) c$ that is not affected by network density. Therefore, for any $\rho$ the after--FOSD profit will always be above than the pre--FOSD one. 
Both welfare and monopolist's profit increase.
\eproof

\subsection{Proof of Proposition \ref{prop:HOMOIN}, page \pageref{prop:HOMOIN} }\label{HOMOIN}

Let us consider a network where all consumers have  in--degree equal to $k_g$.  
Let us first consider what would be the profit of the monopolist chosing a certain $\rho$.
Plugging into  \eqref{eq:L} and, in turn, into \eqref{eq:Gamma} and \eqref{eq:phi}, we get again the system from 
\eqref{system_hom_degree}.
Now, however, $\underline{k}=\Lambda(\rho)$ will vary with $\rho$, and we have that the profit in Equation \eqref{profit_piece} becomes either null (if the monopolist does not use word of mouth) or:
\begin{equation}
 \frac{\beta}{2} \left( \frac{1}{2}-  \frac{ \rho k_g}{\Lambda(\rho)} \frac{c}{\Gamma(\rho)} \right)\Gamma(\rho)
= \frac{\beta}{4} \Gamma(\rho) -  c \frac{\rho k_g}{\Lambda(\rho)}  \ \ . \nonumber
\end{equation}

Now suppose to fix both $\underline{k}$ and the fraction $q$ of consumers with degree $\underline{k}$ that pass the information in equilibrium.
We have that:
\[
\rho = q f(\underline{k}) \cdot \underline{k} + \sum_{k=\underline{k}+1}^{k_{f \max} } f(k) \cdot k \ \ .
\]
Any couple $(\underline{k},q)$ will be feasible also going to a denser network, and fixing them we will have $\rho'>\rho$.\\
For any couple $(\underline{k},q)$, the profit in Equation \eqref{profit_piece} becomes: 
\begin{eqnarray}
\pi (\underline{k},q) = \frac{\beta}{4} \Gamma(\rho)-    c \frac{k_g\rho}{\underline{k}} \ \ .
\label{profit_piece_lattice}
\end{eqnarray}

Now consider the case in which $\underline{k}=k_{f \min}$ (minimal cost for the monopolist to provide incentives to communication) and $\Gamma\rightarrow1$ (maximal informational spread). A sufficient condition for each couple $(\underline{k},q)$ to provide negative profits is that in this best scenario gives negative profits, or formally:\footnote{Notice that, in this simple case, we would have $\rho=f(k_{f \max})$, i.e. only more connected influencers pass the information and maximal informational spread is guaranteed.}  
 
\[
c > \frac{\beta}{4} \frac{k_{f \max}}{k_g f(k_{f \max})} \ \ ,
\]
that using condition \eqref{eq:consistency}, becomes

\[
c > \frac{\beta^{2}}{4} \frac{k_{f \max}}{(1-\beta)E_f (k) f(k_{f \max})} \ \ ,
\]

Going to a denser network we can get $\frac{k'_{f \max}}{ E_f (k') }$ as close to $1$ as we want, so that $f(k_{f \max})$ as well goes to $1$. Therefore,  
$c > \frac{\beta^2}{4(1-\beta) }$ becomes a sufficient condition for the existence of a denser network that is not profitable for the monopolist.
Notice that if $k_{f \max}$ was high enough in the original network, then it could have been possible to find a couple $(\underline{k},q)$  such that \eqref{profit_piece_lattice} was positive in the original sparser network, but negative in the denser network.
\eproof

\subsection{Proof of Proposition \ref{prop:INHOMOWELF}, page \pageref{prop:INHOMOWELF}}\label{INHOMOWELF}

From Equations \eqref{eq_welfare} and \eqref{eq:Gamma}, if the monopolist chooses some $\rho$, welfare depends only on 
\[
\Gamma(k_g) = 1-(1-L (\rho^{*}))^{k_g} \ \ .
\]
When $k_g$ increases to some $k'_g$, $\rho^*$ may move to some $\rho'$. In order to look for sufficient conditions, consider the limit situation where 
$\rho^*=1$, so that $L=1/2$. Plugging into $\Gamma$ we get that going online is welfare enhancing if:

\begin{eqnarray}
\Gamma'>\Gamma
& \Leftrightarrow &
\left(\frac{1}{2}\right)^{k_g}>\left(1-\frac{\rho'}{2}\right)^{k'_g}  \nonumber  \\
& \Leftrightarrow & k'_g <-\frac{\log{(2)}}{\log{[1-\rho'/2]}}k_g  \nonumber  \\
& \Leftrightarrow & 
\frac{k_g}{k'_g}<
1-\frac{\log{[2-\rho']}}{\log{(2)}}  \ \ .
\label{SuffCondFOSDwelfareenh}
\end{eqnarray}

Condition in \eqref{SuffCondFOSDwelfareenh} is sufficient for the FOSD shift to be welfare enhancing.
%
%
\eproof

%
%

\subsection{Proof of Proposition \ref{prop:general} page \pageref{prop:general}}\label{proof:general}

Suppose to create \emph{super hubs} in the $f$ distribution, and to increase uniformly the $g$ distribution.
Then, to respect the network constraint, we give degree $\bar{k}>k_{f \max}$ to the hubs, and add $m$ links to the uninformed consumers, such that
\begin{eqnarray}
(1-f(\bar{k}) \left( \sum f(k) k \right) + f(\bar{k} ) \bar k & = & \frac{\beta}{1-\beta} \left( \sum g(k) (k+m) \right)  \nonumber \\
& = & \left( \sum f(k) k \right) + m \frac{\beta}{1-\beta} 
\ \ , \label{consistencyFOSD}
\end{eqnarray}
which implies
\[
f(\bar{k} ) \bar{ k} =  m \frac{\beta}{1-\beta} +  f(\bar{k} ) \left( \sum f(k) k \right)  \ \ .
\]
It is important to consider that all these expressions for $m$ make sense even if  the latter is a real non--integer number, and we assume to give the integer part $\lfloor m \rfloor$ to each uninformed consumer and the remaining real part $m-\lfloor m \rfloor$ is a probability for each of them to receive the $\lfloor m \rfloor+1^{th}$ link. That is because Equation \eqref{consistencyFOSD} is linear in $m$.

\bigskip
%
\noindent {\bf Step 1: for any $\epsilon>0$, we can get $\Gamma - \epsilon < \Gamma' < \Gamma$.} \\
The old $\Gamma$ was
\[
\Gamma= 1- \left( \sum g(k) \left(  1-\frac{1}{2} \right)^k \right)  \ \ .
\]
The new $\Gamma'$ will become
\[
\Gamma'= 1- \left( \sum g(k) \left(  1-\frac{\rho'}{2} \right)^k \right) \left(  1-\frac{\rho'}{2} \right)^m \ \ .
\]

Since $\lim_{m\rightarrow 0} \Gamma'=0$ and $\lim_{m\rightarrow \infty} \Gamma'=1$, we can pick arbitrarily any value for $\Gamma'$, keeping also $\bar{k}$ free to move as high as wanted (since we have still freedom on $f(\bar{k})$).
So, for each $\epsilon>0$, we can take 
\[
\Gamma-\epsilon < \Gamma' < \Gamma \ \  ,
\]
and are still free to pick any $\bar{k}$.

\bigskip

\noindent {\bf Step 2: we can improve on the payoff from the old network.} \\
Now, in the monopolist problem \eqref{eq:monopolist_problem4} the objective function can be reduced to
\[
\left( \frac{1}{2} -   \frac{c}{ \Lambda(\rho) \phi \left( \frac{\rho}{2}, \frac{1}{2} \right)  } \right)  \Gamma(   \rho / 2 ) \ \  ,
\]
and in this case we compare
\[
\left( \frac{1}{2} -   \frac{c}{ \Lambda(\rho^*) \phi \left( \frac{\rho^*}{2}, \frac{1}{2} \right)  } \right)  \Gamma  
\mbox{ \ \ with \ \ } 
\left( \frac{1}{2} -   \frac{c}{ \bar{k} \phi' \left( \frac{\rho'}{2} , \frac{1}{2} \right) } \right)  (\Gamma  - \epsilon ) \ \  .
\]
As we can take $f(\bar{k} ) \rightarrow 0 $ and $ \bar{k} \rightarrow \infty $, let us show that we can achieve 
\[
\lim_{\bar{k} \rightarrow \infty} \bar{k} \phi' \left( \frac{\rho'}{2} , \frac{1}{2} \right) = + \infty \ \ .
\]
This is always possible if we consider that $\phi' \left( \frac{\rho'}{2} , \frac{1}{2} \right)$ is bounded below by
\[
E' (1/k) = \sum g'(k)/k =  (1-f(\bar{k}) ) E (1/k)  + \left(   m \frac{\beta}{1-\beta} +  f(\bar{k} ) \left( \sum f(k) k \right) \right)  \frac{1}{\bar{k}^2} \ \ .
\]
Since $\Gamma<1$, we can set $\epsilon <  \frac{c}{ k_{f \min} \phi \left( \frac{1}{2}, \frac{1}{2} \right)}$, and we can set a  $ \bar{k}$ high enough so that 
\[
\left( \frac{1}{2} -   \frac{c}{ \Lambda(\rho^*) \phi \left( \frac{\rho^*}{2}, \frac{1}{2} \right)  } \right)  \Gamma
<
\left( \frac{1}{2} -   \frac{c}{ \bar{k} \phi' \left( \frac{\rho'}{2} , \frac{1}{2} \right) } \right)  ( \Gamma  - \epsilon ) \ \  .
\]
\bigskip

\noindent {\bf Step 3: in the new network, the monopolist will not choose an $\underline{k}<\bar{k}$ .} \\
Consider again the monopolist problem \eqref{eq:monopolist_problem4}, where the objective function can be reduced to
\[
\left( \frac{1}{2} -   \frac{c}{ \Lambda(\rho) \phi \left( \frac{\rho}{2}, \frac{1}{2} \right)  } \right)  \Gamma(   \rho / 2 ) \ \  .
\]
We have seen that we can set 
\[
\lim_{\bar{k} \rightarrow \infty} \bar{k} \phi' \left( \frac{\rho'}{2} , \frac{1}{2} \right) = + \infty \ \ .
\]
It is not difficult to see that all the levels of $\underline{k}$ that were feasible in the old network will be feasible also in the new network, but at a higher cost.
That is because $\underline{k}$ will be the same but expected gains from $\phi$ will  be decreased by the congestion created by the new super hubs.
So, suppose that the monopolist chooses some $\rho''> \rho'$, then we can still maintain the following inequalities:
\[
\left( \frac{1}{2} -   \frac{c}{ F(\rho'') \phi \left( \frac{\rho''}{2}, \frac{1}{2} \right)  } \right)  
<
\left( \frac{1}{2} -   \frac{c}{ \Lambda(\rho^*) \phi \left( \frac{\rho^*}{2}, \frac{1}{2} \right)  } \right) 
< 
\left( \frac{1}{2} -   \frac{c}{ \bar{k} \phi' \left( \frac{\rho'}{2} , \frac{1}{2} \right) } \right)  ( 1 - \delta - \epsilon ) \ \  .
\]
The last step is guaranteed by the assumption that $\delta  <  \frac{2 c}{ \Lambda(\rho^*) \phi \left( \frac{\rho^*}{2}, \frac{1}{2} \right)}$, and by the fact that we can achieve 
$\epsilon \rightarrow 0$ and 
\[
\lim_{\bar{k} \rightarrow \infty} \bar{k} \phi' \left( \frac{\rho'}{2} , \frac{1}{2} \right) = + \infty \ \ .
\]
So for any of the feasible $\underline{k}$ in the old network, we can choose $\epsilon$, $\bar{k}$ and $m$, so that the monopolist can do better in the new network choosing to do word--of--mouth only with  the superhubs.
Since, the set of all feasible $\underline{k}$'s in the old network is finite, we can satisfy the conditions of this last step of the proof.
\eproof

\section{Competition} \label{competition}

It is natural to ask what would happen, in the context of our model, if we allowed for competition.
We provide here a simple example to show that in this case the analysis and the results would fall in the domain of the already well--established literature on Bertrand competition with partially informed agents (see e.g.~\citealt{burdett1983equilibrium}).

\bigskip

Consider a situation in which two symmetric firms, $A$ and $B$, compete \emph{\`a la} Bertrand in both markets: the one for informed and the one for uninformed.
To make the example very simple consider also a situation in which $c=0$, so that there is an equilibrium in which  every informed consumer passes the information even if with $b=0$. \\
In this equilibrium the two firms would compete \emph{\`a la} Bertrand in the market of informed consumers, so that we can safely assume that each of the two firms get the same fraction of consumers: $L_A=L_B=\frac{1}{2}$.
Each firm then will reach a fraction $\Gamma(1/2) > 1/2$ of the uninformed consumers, and every uninformed consumer will be reached by news of at least one firm.

Because of symmetry (since we assume no degree correlation in the network structure), this will result in a fraction $1-\Gamma(1/2)$ of consumers who observe only   firm $A$,  $1-\Gamma(1/2)$ who observe only firm $B$, and the remaining $2 \Gamma(L) - 1$ who observe both prices.
The market for uninformed consumers is then exactly the environment studied in \cite{burdett1983equilibrium}, and we know from there that expected prices will be higher when the fraction of consumers who observe only one price, which is $2 - 2 \Gamma(L)$, is higher.\footnote{%
Note that in that literature \emph{informed} and \emph{uninformed} consumers are those observing two prices or one price, respectively. In our setting we are calling \emph{uninformed} all the consumers in this market, because they all get second--hand information about the existence of the product.}
As intuition would suggest, if $g(k)$ gets denser, then $\Gamma(1/2)$ increases, so that consumers become \emph{more informed} and prices go down, harming the monopolist and benefiting consumers.
Since in this case all consumers would buy the product anyway, welfare remains unaffected.

\end{document}